\def\BibTeX{{\rm B\kern-.05em{\sc i\kern-.025em b}\kern-.08em
    T\kern-.1667em\lower.7ex\hbox{E}\kern-.125emX}}
\tikzset{mynode/.style={draw, very thick, circle, minimum size=1cm}}
\tikzset{mynode/.style={draw, very thick, circle, minimum size=1cm},
    myarrow/.style={very thick, -Triangle}}
\newcommand{\BlackBox}{\rule{1.5ex}{1.5ex}}
\def\QED{~\rule[-1pt]{5pt}{5pt}\par\medskip}
\newenvironment{proof}{\par\noindent{\it Proof\ }}{\hfill$\square$}
\newtheorem{theorem}{Theorem}
\newtheorem{problem}{Problem}
\newtheorem{example}[theorem]{Example}
\newtheorem{proposition}[theorem]{Proposition}
\newtheorem{remark}[theorem]{Remark}
\newtheorem{definition}[theorem]{Definition}
\newtheorem{claim}[theorem]{Claim}
\newtheorem{assumption}[theorem]{Assumption}
\newcommand{\be}{\begin{equation}}
\newcommand{\ee}{\end{equation}}
\newcommand{\utau}{\underline{\tau}}
\newcommand{\otau}{\overline{\tau}}
\newcommand{\op}{\bar{p}}
\begin{document}
\date{}

\title{\LARGE \bf Controlling Traffic without Tolls: A Non-Monetary Framework for Autonomous Intersections}

\author{
Arda Kosay\footnotemark[1]\thanks{A. Kosay, Y. Saltan, and M. O. Sayin are with the Department of Electrical and Electronics Engineering, Bilkent University, Ankara 06800, T\"urkiye {\tt \small arda.kosay@bilkent.edu.tr, yusuf.saltan@bilkent.edu.tr, sayin@ee.bilkent.edu.tr}}
\and
Yusuf Saltan\footnotemark[1]
\and
Jyun-Jhe Wang\footnotemark[2]\thanks{J.-J. Wang and C.-W. Lin are with the Department of Computer Science and Information Engineering, National Taiwan University, Taipei 10617, Taiwan {\tt \small r13922195@ntu.edu.tw, cwlin@csie.ntu.edu.tw}}
\and
Chung-Wei Lin\footnotemark[2]
\and
Muhammed O. Sayin\footnotemark[1]
}

\maketitle
\thispagestyle{empty}

\bigskip

\begin{center}
\textbf{Abstract}
\end{center}
The increasing complexity of urban transportation systems, driven by connected and automated vehicles, calls for new modeling paradigms and scalable control strategies. We propose a non-monetary control framework that leverages autonomous intersection management to influence routing decisions without tolls. The approach uses timestamp-based scheduling adjustments at roadside units (RSUs) to introduce path-dependent delays or advancements, steering traffic toward socially efficient flows.
We develop a hierarchical architecture that separates real-time intersection control from network-level coordination. The resulting model admits a congestion-game formulation with path-dependent node costs. We establish the existence and essential uniqueness of equilibrium flows, eliminating ambiguities due to multiple equilibria and enabling a scalable and tractable bilevel optimization formulation for system-level incentive design. Experiments on the Sioux Falls network show that the proposed approach reduces the efficiency gap between user equilibrium and system-optimal flows by up to 71\% under realistic constraints.
These results demonstrate the potential of non-monetary, infrastructure-light control for next-generation intelligent transportation and urban mobility systems.

\bigskip

\textbf{Keywords: } Optimization and control; Traffic networks; Connected and autonomous vehicles; Autonomous intersections; Non-monetary incentives.

\bigskip
\bigskip

\newpage
\setcounter{page}{1}
\begin{spacing}{1.245}

\section{Introduction}
\label{sec:introduction}
Urban transportation systems are becoming increasingly complex with the integration of connected and automated vehicles and infrastructure. For example, in autonomous intersections, vehicles can communicate directly with roadside units (RSUs) that schedule each vehicle's passage \textit{individually} rather than on a per-lane basis \citep{ref:Dresner08,ref:Zhong20,ref:Khayatian20} (see Fig. \ref{fig:AIM}). 
This paradigm shift enables fine-grained, real-time control of intersection access at the vehicle level, creating new opportunities for coordinating traffic flows across the network. However, autonomous intersections introduce new computational and strategic challenges. RSUs must process a continuous stream of individual requests while the local scheduling decisions can influence drivers’ routing choices at a network-wide level—potentially leading to unintended congestion patterns. To turn these challenges into opportunities, our work addresses the following research question: \emph{How can intersection-level control be systematically designed to influence routing behavior and improve network-wide traffic efficiency without relying on monetary incentives?}

\begin{figure}[t!]
\centering
\includegraphics[width =\textwidth]{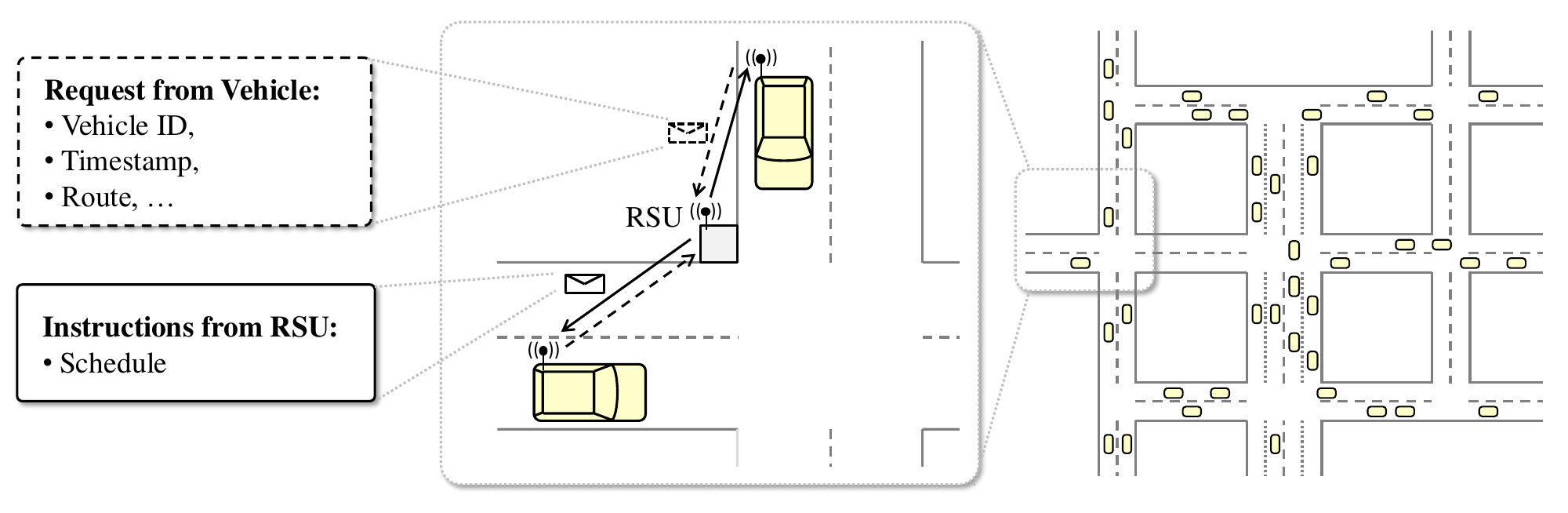}
\caption{Autonomous intersection management based on communication between vehicles and RSUs. }\label{fig:AIM}
\end{figure}

\subsection{Contributions and Approach}
We propose a new scalable and non-monetary intersection control framework with two key aspects: hierarchical design using timestamp modifications and offline bilevel optimization based on experimentally validated analytical model.

\textit{Hierarchical Design using Timestamp Modifications.} Coordinating multiple intersections to influence selfish drivers’ routing is computationally demanding due to the network’s large scale and the timing- and safety-critical nature of control. Fully centralized, end-to-end optimization is thus impractical and unreliable. To balance tractability and responsiveness, we adopt a hierarchical architecture that separates low-level control at individual intersections by RSUs from high-level planning across the network (see Fig. \ref{fig:model}).

Low-level control mechanisms, such as the widely-used first-come-first-serve (FCFS) protocols, generally ensure a positive correlation between \textit{earlier} requests and \textit{earlier} passage through the intersection to reduce the waiting time and they can ensure this correlation by using the timestamps of the driver requests in their decision-making.\footnote{Here, the term \textit{timestamp} refers to an abstract scheduling variable that may correspond to a vehicle’s request time, reported arrival time, or requested passage time, provided it remains positively correlated with the order in which the RSU grants intersection access.} We artificially \textit{delay or advance} these timestamps at the high-level planning. For example, when a timestamp is delayed, the RSU treats the driver request as if it has a later timestamp and correspondingly the driver is likely to experience longer waiting time. This can incentivize these drivers to choose other routes. The system planner determines these modifications by considering the entire traffic network and the counterfactual impacts of the decisions made, i.e., how drivers adapt their routing choices.

\textit{Offline Bilevel Optimization.} We consider a high-level planner using theoretical modeling of traffic flow across roadways and intersections to design timestamp modifications associated with the drivers' path choices in an offline manner. We approximate large driver populations via a nonatomic routing-game model with \textit{non-negative}, \textit{continuous}, and \textit{non-decreasing} latency functions. Since autonomous intersections are not yet widely deployed, we calibrate intersection costs using high-fidelity traffic simulations. These experiments show that timestamp modifications impact routes as \textit{additive}, path-specific costs. This experimentally grounded model ensures an \textit{essentially unique equilibrium flow} corresponding to the minimum of a new potential function we introduced. Hence, we can formulate a well-defined bilevel optimization problem solvable without the ambiguity of low-level outcome induced by equilibrium flows' multiplicity.

Our technical contributions are threefold:
\begin{itemize}
\item We propose a scalable hierarchical control architecture that separates local intersection scheduling from network-level coordination via non-monetary incentives.
\item We design a timestamp-based control mechanism that leverages RSU prioritization to induce additive cost adjustments along routes. This structure is critical for scalability: it guarantees essential uniqueness of equilibrium flows, avoids worst-case equilibrium selection, and enables a computationally tractable bilevel optimization framework for network-level control.
\item We experimentally validate the proposed framework through high-fidelity traffic simulations, demonstrating up to a 71\% reduction in the efficiency gap in the Sioux Falls network.
\end{itemize}
Importantly, this approach does not require additional infrastructure or monetary mechanisms, making it particularly attractive for practical deployment in emerging connected vehicle ecosystems.

\subsection{Related Work}

Existing works have studied self-interested routing decisions and intersection control largely \textit{in isolation}. In the following, we first review the literature on self-interested routing decisions and then on intersection management, highlighting the need to bridge these two research directions.

\textbf{Self-interested Routing Decisions.}
Traffic congestion and routing inefficiencies have long been central topics in transportation and game theory, with extensive research devoted to modeling and mitigating the resulting economic and environmental impacts.

\textit{Modeling Selfish Routing.}
Classical studies in nonatomic congestion games provide the foundational framework for understanding traffic flow in environments where users act selfishly. A central concept in this literature is \textit{Wardrop equilibrium} \citep{ref:Wardrop52}, which posits that no individual driver can unilaterally reduce their travel time by switching routes. Beckmann’s potential function \citep{ref:Beckmann56} offers a convex optimization formulation of these equilibria, enabling tractable analysis of flow distributions in networks.

However, equilibrium flows in routing games are typically inefficient compared to the socially optimal allocation—a discrepancy quantified by the \textit{Price of Anarchy} (PoA) \citep{ref:Koutsoupias99, ref:Roughgarden02}. For instance, in transportation networks modeled with the Bureau of Public Roads (BPR) latency function—often of quartic form—the PoA is bounded above by 2.151 \citep{ref:Roughgarden16}. Even marginal reductions in this inefficiency can translate into significant societal benefits, such as reduced commute times and economic productivity gains, as demonstrated in empirical studies like the analysis for Singapore in \citep[Appendix A.1]{ref:Monnot22}.

\begin{figure}[t!]
\centering
\includegraphics[width=\textwidth]{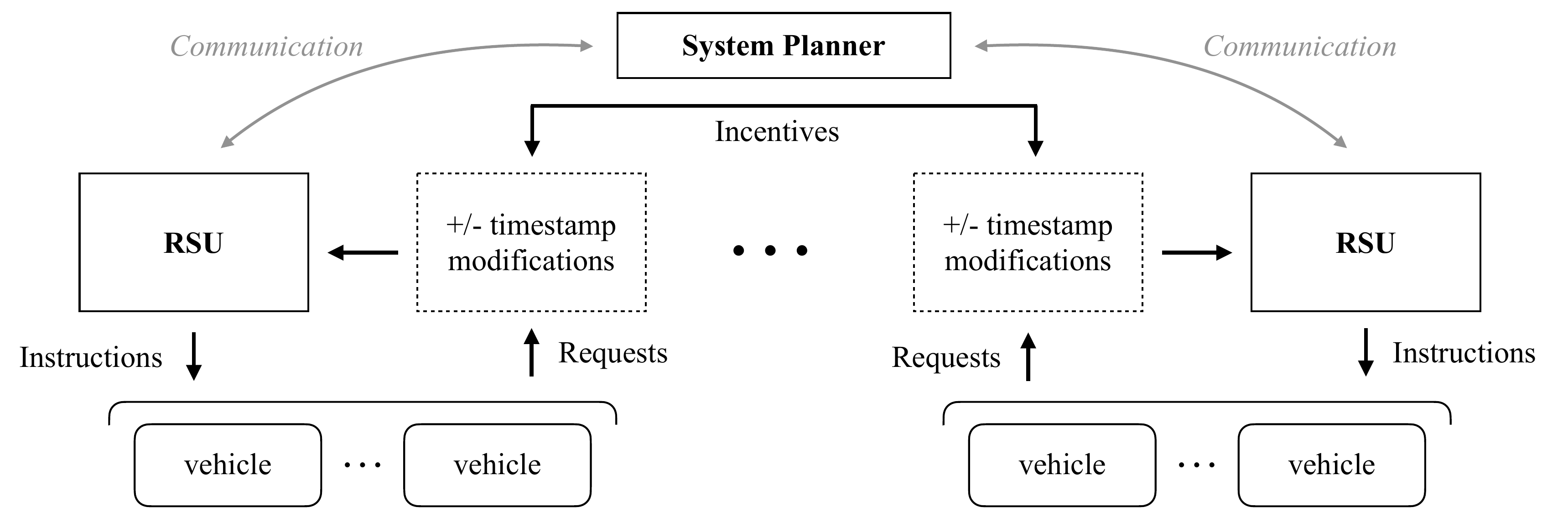}
\caption{Proposed two-layer solution separating incentive control and local intersection scheduling via timestamp modifications.}\label{fig:model}
\end{figure}

\textit{Mitigating Selfish Routing.}
Motivated by this inefficiency gap, a significant body of work has focused on aligning individual routing decisions with socially optimal outcomes through various incentive mechanisms. Monetary approaches—such as auction-based tolling and marginal cost pricing \citep{ref:Fleischer04, ref:Cole03, ref:Paccagnan21}—have demonstrated theoretical effectiveness but often raise equity concerns by disproportionately burdening less affluent drivers \citep{ref:Gemici18}. To address these equity issues, recent studies have investigated fairness-constrained traffic assignment problems \citep{ref:Angelelli21}, differentiated tolling schemes \citep{ref:Lazar21, ref:Maheshwari24}, and revenue redistribution mechanisms \citep{ref:Jalota24}. Alternatives like token-based systems \citep{ref:Sayin18, ref:Censi19} have also been proposed, but these rely on artificial currencies whose external value may be limited in practice.

In parallel, the role of information in routing games has been extensively investigated. The availability of route and congestion information can markedly affect traffic equilibria \citep{ref:Acemoglu18, ref:Wu21}. Studies on the informational Braess’ paradox reveal that incomplete information can sometimes improve equilibrium outcomes. Information design frameworks based on Bayesian persuasion \citep{ref:Kamenica11} have shown that carefully tailored private signals can outperform full disclosure. Nonetheless, these approaches are generally limited to simplified network models \citep{ref:Das17, ref:Cianfanelli23} and may be compromised by the presence of multiple competing information providers \citep{ref:Tavafoghi19}.

Beyond these monetary and information-based methods, efficiency improvements have also been pursued via capacity increases and network design. In networks with high-order nonlinear latency functions, even modest capacity enhancements can yield significant gains \citep{ref:Roughgarden07}. However, transportation networks are typically modeled using low-order latency functions—such as the BPR function—which means that generally substantial capacity increases would lead to meaningful improvements, rendering this approach less practical. Similarly, network design studies explore infrastructure modifications (e.g., adding or removing roads) to optimize traffic flow \citep{ref:Farahani13, ref:Cianfanelli23b}, which is a provably hard problem \citep{ref:Roughgarden06}.

\textbf{Intersection Management.}
Intersections are among the most critical bottlenecks in urban traffic networks. Traditional signal control methods \citep{ref:Papageorgiou03} rely on fixed or adaptive timing plans that often struggle to accommodate real-time fluctuations. Autonomous intersection management leverages vehicle-to-infrastructure communication to allow RSUs to schedule intersection usage on an individual request basis \citep{ref:Dresner08, ref:Lin19} (see Fig. \ref{fig:AIM}). To address the challenges of scheduling and prioritization in these systems, various approaches have been proposed. The simplest is the first-come-first-serve (FCFS) policy, which performs adequately under low traffic densities. \citet{ref:Dresner08} enhanced FCFS by introducing reservation tiles that allow for adjustable scheduling granularity. \citet{ref:Yang16} proposed a bilevel optimization model that integrates departure sequences with vehicle trajectory design based on arrival information, while \citet{ref:Lin19} developed a graph-based, divide-and-conquer scheduling policy that guarantees deadlock-free passage. More recently, reinforcement learning (RL) techniques have been applied to intersection management \citep{ref:Guan20, ref:Wu19, ref:Huang23}, reflecting the dynamic nature of the field. Furthermore, certain user groups, such as emergency services, may require preferential treatment—resulting in distinct travel times at intersections \citep{ref:Litman09, ref:Zhang15, ref:Harks18, ref:Hoefer11}. However, these solutions overlook the \textit{counterfactual impact} of intersection management on drivers' routing decisions—local improvements can inadvertently worsen overall congestion due to the strategic adaptations of drivers. Within the game-theoretical framework, \citet{ref:Scheffler22} examined edge-priority models in competitive packet routing games and \citet{ref:Saltan24} (a precursor to this work) demonstrated the benefits of strategic priority-based scheduling in the classical Pigou's example (with two parallel roads and one intersection).

\textbf{In this paper}, we model selfish routing via congestion games and introduce a \textit{strategic intersection control} framework as an additional layer on the existing intersection management solutions to delay or advance intersection usages based on route choices. This design allows intersection management to actively shape routing decisions, mitigating inefficiencies without monetary payments or infrastructural expansion. Our hierarchical architecture separates localized, real-time intersection control from system-wide incentive design for scalability and tractability, addressing general network topologies beyond prior illustrative examples in \citep{ref:Saltan24}.

The remainder of the paper is organized as follows. In Section \ref{sec:problem}, we introduce our new scalable and non-monetary intersection control framework. Section \ref{sec:main} formulates a bilevel optimization problem based on equilibrium flow characterization. In Section \ref{sec:examples}, we experimentally analyze and validate our framework through high-fidelity traffic simulations. Section \ref{sec:conclusion} concludes the paper and outlines potential directions for future research. Appendices \ref{app:potential} and \ref{app:existence} include the technical proofs and Appendix \ref{app:sumo-details} describes the experimental setup.

\section{Scalable Non-Monetary Control of Autonomous Intersections}\label{sec:problem}

Consider a traffic network with multiple autonomous intersections. Each intersection is equipped with a roadside unit (RSU) that collects requests from approaching vehicles. These requests include vehicle identification and timestamps (i.e., request times), enabling the system planner to track their trajectories from the initial terminal to the associated intersection (see Fig. \ref{fig:AIM}). These requests can also include the vehicles' routing plans such that the system planner can identify their entire trajectories. The RSUs schedule intersection usage based on these timestamps, using heuristic-based policies (e.g., first-come-first-serve \citep{ref:Dresner08, ref:Lin19}) or computationally intensive methods (e.g., planning and optimal control algorithms \citep{ref:Guan20, ref:Wu19, ref:Huang23}) while earlier requests generally lead to earlier passage through the intersection for shorter travel time.

We propose an additional \textit{layer of decision-making across multiple intersections} by a system planner to influence drivers' routing decisions dynamically. Specifically, based on a vehicle's trajectory, the system planner makes timestamp modifications—either delays (negative incentives) or advancements (positive incentives)—before requests are processed by RSUs, as illustrated in Fig. \ref{fig:model}. The RSUs execute their local scheduling policies (e.g., FCFS or optimal control) based on these adjusted timestamps. This mechanism aims to incentivize efficient routing decisions to mitigate congestion and improve system-wide travel quality. In the following, we formalize these timestamp offsets as additive, path-dependent node costs.

\subsection{Analytical Framework}
We can model the traffic network as directed graphs $G=(V,E)$, where $V$ represents autonomous intersections and $E$ represents roads. We consider a single-commodity network with a unit-mass of traffic flowing from a source terminal $s\in V$ to a destination terminal $t\in V$, for notational simplicity.\footnote{The analytical framework and the potential-based equilibrium results extend in a routine manner to finitely many commodities by introducing commodity-specific path sets and flow-conservation constraints (e.g. see \citep[Section~6.4]{ref:Acemoglu18}).} A \textit{path} $p\in\mathcal{P}$ is a sequence of nodes and edges from $s$ to $t$, with $\mathcal{P}$ denoting the \textit{finite} set of such paths. The traffic \textit{flow} is represented as $f=(f_p)_{p\in \mathcal{P}}$, where $f_p$ denotes the fraction of traffic using path $p$. We assume an \textit{inelastic} demand such that the \textit{feasible} flow $f$ satisfies the flow conservation constraints: 
\be\label{eq:feasible}
\mathcal{F} := \left\{f=(f_p)_{p\in \mathcal{P}} : \sum_{p\in\mathcal{P}} f_p = 1,\mbox{ and } f_p\geq 0\;\forall p\right\},
\ee
i.e., $f\in \mathcal{F}$.
For each edge $e\in E$ and node $v\in V$, we define the edge flow $f_e$ and node flow $f_v$ as
\begin{align}
f_e := \sum_{p\in \mathcal{P}:e\in p} f_p \quad \mbox{and} \quad f_v := \sum_{p\in \mathcal{P}:v\in p} f_p.
\end{align}

Each edge $e$ has an edge cost function $c_e:[0,1]\rightarrow[0,\infty)$, mapping flow $f_e\geq 0$ to an average cost, e.g., corresponding to the expected travel time. Unlike classical routing games, at each intersection $v$, we also consider \textit{node cost} $c_v^p(f_v, u_v^p)$, corresponding to the expected intersection-usage time. Here, $u_v^p\in[-\utau,\otau]$ for some $\utau,\otau \geq 0$ is the path-specific incentive, modeling the impact of timestamp modifications on the intersection cost. We model the impact of timestamp modifications as an \textit{additive} incentive term $u_v^p$ on a base delay based on the experiments in high-fidelity traffic simulations, discussed in Section \ref{sec:examples}. Formally, the cost of a node $v$ for drivers choosing path $p$ is given by
\be\label{eq:nodecost}
c_v^p(f_v, u_v^p) := c_v(f_v) + u_v^p,
\ee
where $c_v:[0,1]\rightarrow[\utau,\infty)$ is the \textit{base} node cost, capturing the intrinsic (average) intersection delay induced by the intersection topology and how the associated RSU schedules the intersection usage.\footnote{The lower bound $\utau \geq 0$ on $c_v(\cdot)$ and $u_v^p$ ensures that node cost $c_v^p(\cdot)\geq 0$.} The term $u_v^p$, determined by the system planner, \textit{increases or decreases} intersection cost so that we can shape route selections.

The following assumption ensures well-behaved costs.

\begin{assumption}\label{assm:cost}
Costs $c_e(\cdot)$ and $c_v^p(\cdot)$ are \textit{non-negative}, \textit{continuous}, and \textit{non-decreasing} in flow $f_e$ and $f_v$, respectively.
\end{assumption}

\begin{remark}[Average/Expected Cost]
These costs functions correspond to the average/expected cost, abstracting away from the occasional differences that may occur in practice. For example,
for vehicles approaching an intersection at the same lane, the timestamp modifications can lead to scenarios where a vehicle might have an earlier timestamp than the vehicles in the front. RSUs can determine how to resolve such inconsistencies and this may result in (operationally) delayed or expedited intersection usage for some vehicles. However, locations and route choices of vehicles in an approaching lane are random and such impacts average out across time.
\end{remark}

The cost of a path $p\in \mathcal{P}$ is given by
\be\label{eq:pathcost}
c_p(f, u) := \sum_{e\in p} c_e(f_e) + \sum_{v\in p}c_v^p(f_v,u_v^p),
\ee
where $u= (u_v^p\in [-\utau,\otau])_{v\in V, p\in\mathcal{P}}$.
Self-interested drivers choose paths that minimize their individual travel costs, leading to a \textit{Wardrop equilibrium}.

\begin{definition}[Wardrop Equilibrium]\label{def:equilibrium}
For a given incentive profile $u$, a feasible flow $f=(f_p)_{p\in\mathcal{P}}$ is an equilibrium flow if and only if we have
$c_p(f, u) \leq c_{\op}(f, u)$
for all paths $\op,p\in \mathcal{P}$ such that $f_p>0$.
\end{definition}

\begin{remark}[Equilibrium Behavior]
The drivers know or learn the path costs under the incentives determined by the system planner. They look for the paths with the smallest cost and \textit{stabilize} at Wardrop equilibrium in which none has incentive to change their routes. This equilibrium captures the \textit{steady‐state} outcome of route adjustments over time under fixed incentive schemes, rather than repeated real‐time re‐optimization by each driver and the system planner. In other words, the user equilibrium represents the time‐averaged distribution of selfish route choices under the designed incentives.
\end{remark}

For a given incentive profile $u$, the \textit{social cost} of a flow $f$ is defined as
\be\label{eq:socialcost}
C(f, u) := \sum_{p\in \mathcal{P}} c_p(f, u) \cdot f_p.
\ee
We can separate path-invariant costs from path-specific incentives. Substituting \eqref{eq:pathcost} into \eqref{eq:socialcost} and changing the order of summation, we can rewrite the social cost as 
\begin{flalign}
C(f, u)= C^o(f) + \sum_{p\in\mathcal{P}} f_p \cdot u_p, 
\end{flalign}
where 
\be
C^o(f) := \sum_{e\in E} c_e(f_e) f_e + \sum_{v\in V} c_v(f_v) f_v
\ee
is the cost without timestamp modifications and $u_p := \sum_{v\in p}u_v^p$ is the total incentive adjustment along path $p$.

\begin{figure}[t!]
    \centering
    \includegraphics[width=0.8\linewidth]{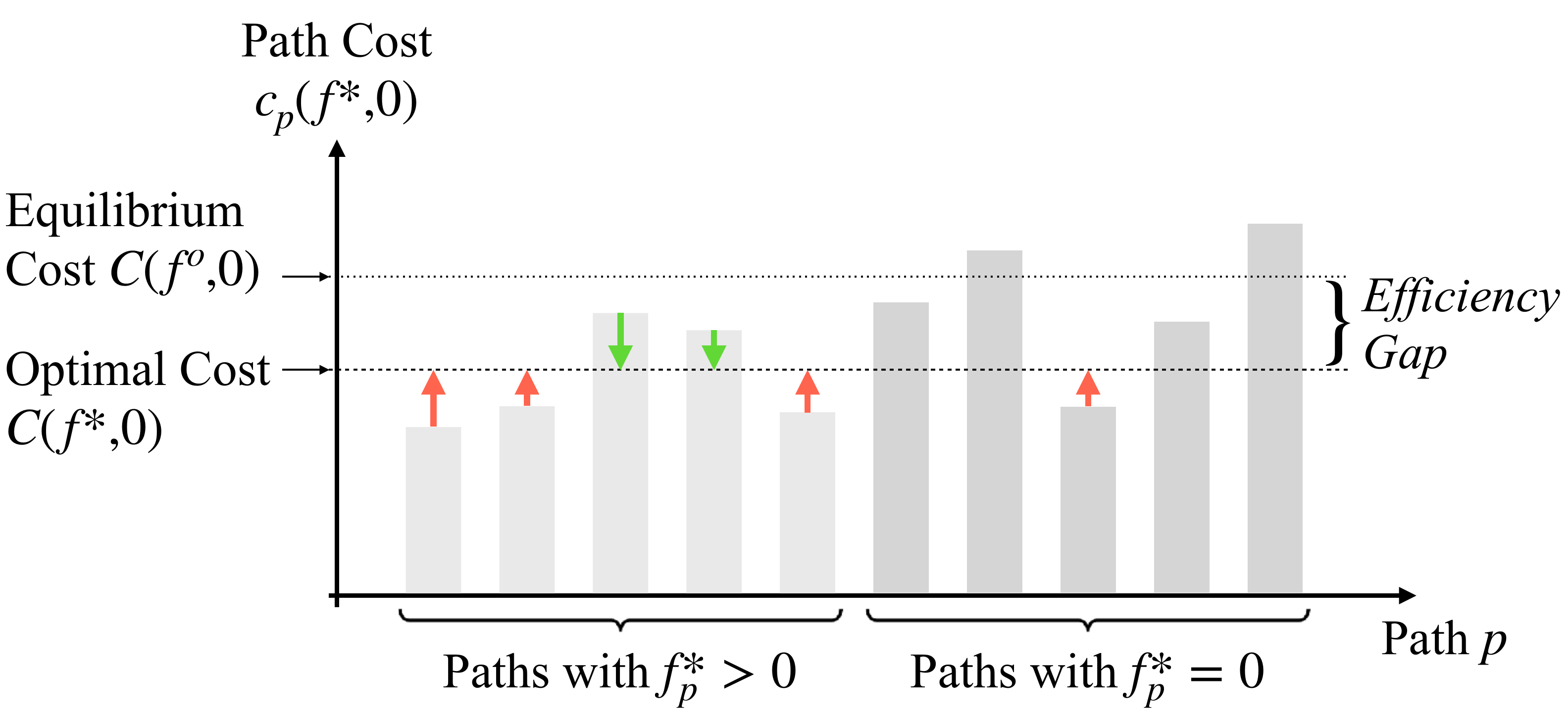}
    \caption{A figurative plot of path costs vs paths under optimal flow $f^*$ without any incentives applied. Vertical arrows depict how incentives can increase or decrease path costs so that optimal flow can become an equilibrium flow under incentives while still attaining the optimal cost.}
    \label{fig:optimal}
\end{figure}

Without incentives, i.e., $u_v^p = 0$ for all $v,p$, flows minimizing the social cost are not necessarily equilibrium flows since paths with positive flows can have costs higher than other paths, as illustrated in Fig. \ref{fig:optimal}. Correspondingly, there can be an \textit{efficiency gap} between social costs under equilibrium flows and the minimum cost. 
Our goal is to design incentives $u$ to improve equilibrium efficiency in terms of the social cost \eqref{eq:socialcost}. For example, we can choose $u_p$'s such that all paths with positive flows under the optimal flow attain the optimal cost as the path cost (so that their weighted average \eqref{eq:socialcost} is the minimum cost) while all other paths have higher or equal costs, as depicted via arrows in Fig. \ref{fig:optimal}. However, such solutions may not be applicable in general due to the constraints on the incentives. Therefore, we need a principled optimization framework to balance the trade-off between equilibrium and efficiency with systematic guarantees.

\begin{remark}[Constraints]
We consider bounded incentives $u_v^p\in [-\underline{\tau},\overline{\tau}]$ for all $v,p$. There can be additional constraints in practical implementations. For example, the system planner (communicating with RSUs) can track the sequence of edges and nodes used by each vehicle based on their requests. Thus, the planner can always identify the drivers' paths yet \textit{partially} since paths overlapping up to the current intersection and differing afterwards are indistinguishable at the current intersection. Moreover, the planner can even identify entire paths of drivers if vehicles are (truthfully) reporting their routing plans in their requests.
\end{remark}

Let $\mathcal{U} \subset \prod_{(v,p)\in V\times \mathcal{P}} [-\utau,\otau]$ be the feasible incentive set, capturing practical constraints. In particular, $\mathcal{U}$ need not require RSUs to know each driver's full route or future path. For example, $\mathcal{U}$ may restrict incentives to depend only on currently observed path information, so that vehicles with the same observed incoming path receive the same incentive regardless of their downstream routes. Then, the following is a formal description of the optimal incentive design problem.

\begin{problem}[Optimal Incentive Design]\label{prob:main}
Solve
\be\label{eq:prob_main}
\min_{u\in\mathcal{U}}\max_{f\in \mathcal{F}} \left\{C^o(f) + f\cdot u\right\}
\ee
subject to the equilibrium condition 
$c_p(f,u)\leq c_{\op}(f,u)$ for all $p,\op\in\mathcal{P}$ and $f_p>0$,
where $f\cdot u := \sum_p f_p\cdot u_p$ denotes the total additive cost induced by the incentives.
\end{problem}

Given incentives, there might be multiple equilibrium flows. Problem \ref{prob:main} seeks the \textit{worst-case} equilibrium improvement, to address the multiplicity issue. Later, we show the essential uniqueness of equilibrium flows to simplify the problem.

\section{Bi-level Optimization Formulation}\label{sec:main}
    
 Our goal is to design incentives in such a way that the induced equilibrium flow minimizes the overall social cost. However, the twist is that the equilibrium flow does not arise arbitrarily; it is the outcome of the drivers' selfish behavior. In other words, for any given incentive structure, drivers adjust their routing to minimize their individual perceived costs. Therefore, we first focus on characterizing the drivers' equilibrium behavior in our model.

In many classical routing models, edges bear the primary costs while nodes are typically treated as auxiliary, cost‐free elements, e.g., see \citep{ref:Roughgarden07}. In contrast, our timestamp modifications yield \textbf{additive path‐specific costs} that directly affect drivers’ routing decisions. This raises a natural question: \textit{Can nodes with such path‐dependent costs be analyzed by directly treating them as edges?}

\begin{remark}
Although intersection costs are path-dependent, they do not depend on driver identity. All drivers selecting the same path incur the same cost, preserving the structure of classical congestion games. This distinction is important, as more general prioritization schemes may break this structure and lead to loss of desirable properties such as existence and essential uniqueness of equilibrium flows.
\end{remark}

Another technical challenge is that different paths experience different node costs that depend on the overall node flow (rather than solely on the flow along that path). Such a distinction prevents a straightforward aggregation of node and edge costs as is standard in congestion models. Our network model with \textit{additive} path‐specific node costs can be transformed into a standard routing model by interpreting nodes as a specific type of hyper‐edges. Consider the following example:

\begin{figure}[t]
    \centering
    \begin{subfigure}{.5\textwidth}
    \centering
    \begin{tikzpicture}[scale=0.65, >=stealth, node distance=1.7cm, font=\small]
    
       \node[draw, circle,red] (s) at (0,0) {$s$};
       \node[draw, circle,red] (v) at (3,0) {$v$};
       \node[draw, circle,red] (t) at (6,0) {$t$};
    
       \draw[->, thick,red] (s) edge[bend left=25] node[above]{$c_{e_1}(f_{e_1})$} (v);
       \draw[->, thick] (s) edge[bend right=25] node[below]{$c_{e_2}(f_{e_2})$} (v);
    
       \draw[->, thick,red] (v) edge[bend left=25] node[above]{$c_{e_3}(f_{e_3})$} (t);
       \draw[->, thick] (v) edge[bend right=25] node[below]{$c_{e_4}(f_{e_4})$} (t);
    
    \end{tikzpicture}
    \caption{\small Original network}\label{fig:node_as_edge_a}
    \end{subfigure}

    \vspace{.1in}
    \begin{subfigure}{.5\textwidth}
    \centering
    \begin{tikzpicture}[scale=0.65, >=stealth, node distance=1.7cm, font=\small]

       \node[draw, circle,red] (s2) at (-1.3,0) {$s$};
       \node[draw, circle,dashed,red] (nIn) at (2.3,0) {$v_0$};
       \node[draw, circle,dashed,red] (nOut) at (4.8,0) {$v_1$};
       \node[draw, circle,dashed,red] (cNode) at (7.3,0) {$v_2$};
       \node[draw, circle,red] (t2) at (10.7,0) {$t$};
    
       \draw[->, thick,red] (s2) edge[bend left=25] node[above]{$c_{e_1}(f_{e_1})$} (nIn);
       \draw[->, thick] (s2) edge[bend right=25] node[below]{$c_{e_2}(f_{e_2})$} (nIn);
    
       \draw[->, very thick,dashed,red] (nIn) -- node[above]{$c_v(f_v)$} (nOut);
    
       \draw[->, thick,dashed,red] (nOut) to[out=30, in=150]
           node[above,pos=0.5]{$u_v^{1}$} (cNode);
       \draw[->, thick,dashed] (nOut) to[out=10, in=170]
           node[above,pos=0.6]{} (cNode);
       \draw[->, thick,dashed] (nOut) to[out=-10, in=190]
           node[below,pos=0.6]{} (cNode);
       \draw[->, thick,dashed] (nOut) to[out=-30, in=210]
           node[below,pos=0.5]{$u_v^4$} (cNode);
    
       \draw[->, thick,red] (cNode) edge[bend left=25] node[above]{$c_{e_3}(f_{e_3})$} (t2);
       \draw[->, thick] (cNode) edge[bend right=25] node[below]{$c_{e_4}(f_{e_4})$} (t2);
    
       \draw[thick, rounded corners=0.4cm] 
         ($(nIn)+(-0.8,1.4)$) 
         rectangle 
         ($(cNode)+(0.8,-1.4)$);
    
    \end{tikzpicture}
    \caption{Transformed Network}\label{fig:node_as_edge_b}
    \end{subfigure}
    \caption{Comparison of the original (a) three-node/four-edge network and 
    (b) the transformed version with node splitting. The path $p_1=se_1ve_3t$ and its counterpart in the transformed network are highlighted in red.}
    \label{fig:node_as_edge}
    \end{figure}

\begin{example}\label{example}
Consider a network with three nodes $s,v,t$ and four edges $e_1,\ldots,e_4$, as depicted in Fig. \ref{fig:node_as_edge_a}. Here, there are four paths, e.g., $\mathcal{P}=\{p_1=se_1ve_3t, p_2=se_1ve_4t, p_3=se_2ve_3t, p_4=se_2ve_4t\}$. Each path incurs a different node cost. We can transform the network by splitting node $v$ into two components (See Fig. \ref{fig:node_as_edge_b}). The first component is an edge through which all node flow passes, capturing the baseline node cost $c_v(f_v)$. The second component is a collection of path-specific parallel edges that allow only the corresponding path flow to pass through. These path-specific edges incur constant costs that do not depend on the node flow, thereby fitting the transformed model into the classical routing framework. 
\end{example}

This transformation implies that we can apply a rich array of traditional routing techniques to networks with additive path‐specific costs, thereby bridging the gap between classical and path‐specific models.

\subsection{Equilibrium Characterization}

A key property of congestion games is that equilibrium flows can often be characterized as solutions of an optimization problem. To analyze equilibrium behavior in our setting, we establish a \textit{potential function} whose minimizers correspond exactly to equilibrium flows.

In classical congestion models with edge flows $f_e$ and cost functions $c_e(\cdot)$, Beckmann’s potential function is given by
\[
\Phi_{\text{Beckmann}}(f) = \sum_{e\in E} \int_0^{f_e} c_e(z)\,dz ,
\]
and its minimizers coincide with equilibrium flows \citep{ref:Beckmann56}.
Based on the hyper-edge transformation, we extend this classical construction to account for intersection delays and timestamp-based incentives and define
\be\label{eq:newpotential}
\Phi(f,u) := \sum_{e\in E} \int_0^{f_e}c_e(x)dx + \sum_{v\in V}\int_0^{f_v}c_v(y)dy + f\cdot u.
\ee
This defines a potential function since under a fixed incentive profile $u$, the marginal variation of $\Phi(f,u)$ with respect to a path flow matches the corresponding perceived path cost. The following proposition characterizes equilibrium flows through the potential function $\Phi(f,u)$ for given $u$.

\begin{proposition}[Potential Function]\label{prop:potential}
Consider a routing game with additive path-specific costs and suppose that Assumption \ref{assm:cost} (non-negative, continuous, and non-decreasing costs) holds. Then, for a given incentive profile $u\in \mathcal{U}$, a flow $f=(f_p)_{p\in\mathcal{P}}$ is an \textbf{equilibrium flow} if and only if 
\be\label{eq:potential_min}
f\in \arg\min_{\tilde{f}\in \mathcal{F}} \;\Phi(\tilde{f},u).
\ee
\end{proposition}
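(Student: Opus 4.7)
The plan is to mirror Beckmann's classical argument, extended to accommodate both the node-based delays and the path-specific additive incentives. Concretely, I would cast $\min_{\tilde f\in\mathcal{F}}\Phi(\tilde f,u)$ as a convex program whose KKT conditions reproduce the Wardrop inequalities \eqref{eq:equilibrium}, and then argue that both directions of the iff follow from the same stationarity analysis.

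First, I would verify that $\Phi(\cdot,u)$ is convex in $f$ on $\mathcal{F}$. Under Assumption \ref{assm:cost} the maps $x\mapsto c_e(x)$ and $y\mapsto c_v(y)$ are non-decreasing and continuous, so each primitive $F_e(z)=\int_0^{z}c_e(x)dx$ and $F_v(z)=\int_0^{z}c_v(y)dy$ is convex in its scalar argument. Since $f_e$ and $f_v$ are linear (nonnegative sums) in the path flows, the compositions $F_e(f_e)$ and $F_v(f_v)$ are convex in $f$; the term $f\cdot u$ is linear in $f$; and $\mathcal{F}$ is a simplex, hence convex and compact. Thus \eqref{eq:potential_min} is a well-defined convex program with a nonempty set of minimizers.

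Second, I would compute $\partial\Phi/\partial f_p$ using the chain rule together with $\partial f_e/\partial f_p = \mathbf{1}[e\in p]$ and $\partial f_v/\partial f_p = \mathbf{1}[v\in p]$. This yields
\[
\frac{\partial \Phi(f,u)}{\partial f_p} \;=\; \sum_{e\in p}c_e(f_e)\;+\;\sum_{v\in p}c_v(f_v)\;+\;u_p \;=\; c_p(f,u),
\]
which is exactly the path cost defined in \eqref{eq:pathcost}. This identification is the crucial step: the additive, path-specific incentives $u_v^p$ contribute only the linear term $f\cdot u$ to $\Phi$, whose gradient along $f_p$ reproduces $u_p$ cleanly. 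Intuitively, this is precisely what the hyper-edge transformation of Example \ref{example} already suggests, since path-specific parallel edges carry constant (flow-independent) costs whose Beckmann integrals collapse to linear-in-flow terms.

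Third, I would write the KKT conditions for \eqref{eq:potential_min}. Introducing a multiplier $\lambda\in\mathbb{R}$ for $\sum_p f_p=1$ and $\mu_p\geq 0$ for $f_p\geq 0$, stationarity gives $c_p(f,u)=\lambda+\mu_p$, complemented by $\mu_p f_p=0$. For any $p$ with $f_p>0$ we obtain $c_p(f,u)=\lambda$, while for any other $\bar p$ we have $c_{\bar p}(f,u)=\lambda+\mu_{\bar p}\geq \lambda$. Combining the two yields $c_p(f,u)\leq c_{\bar p}(f,u)$ whenever $f_p>0$, which is \eqref{eq:equilibrium}. Conversely, given any equilibrium $f$, defining $\lambda$ as the common cost on used paths and $\mu_p:=c_p(f,u)-\lambda\geq 0$ on unused paths certifies the KKT conditions; by convexity they are sufficient for global optimality, so $f\in\arg\min_{\tilde f\in\mathcal{F}}\Phi(\tilde f,u)$. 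The main obstacle, and the one I would be most careful about, is the gradient computation: one must correctly track how $f_e$ and $f_v$ depend on $f$ and verify that the path-specific nature of $u_v^p$ (which precludes the naive ``add a node integral'' rule alone) is faithfully captured by the linear term $f\cdot u$; once this identification is nailed down, the iff collapses into a standard convex-KKT argument.
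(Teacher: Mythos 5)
Your proposal is correct and follows essentially the same route as the paper's proof: establish convexity of $\Phi(\cdot,u)$ from the monotonicity of the cost functions and the linearity of $f\cdot u$, identify $\partial\Phi/\partial f_p$ with the path cost $c_p(f,u)$, and then read off the Wardrop conditions from the KKT stationarity and complementary slackness in both directions. No gaps.
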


The proof (see Appendix \ref{app:potential}) extends the classical characterization of Wardrop equilibrium \citep{ref:Smith79} to accommodate the additive path‐specific costs.

Our next result establishes existence and uniqueness properties of equilibrium flows.

\begin{theorem}[Existence and Uniqueness] \label{thm:existence}
Consider a routing game with additive path-specific costs under a given exogenous incentive profile $u$. If Assumption \ref{assm:cost} holds, then
\begin{itemize}
\item \textbf{Existence:} An equilibrium flow always exists.
\item \textbf{Essential Uniqueness:} For any two equilibrium flows $f$ and $\bar{f}$, the social cost is identical, i.e., $C(f, u) = C(\bar{f}, u)$.
\end{itemize}
\end{theorem}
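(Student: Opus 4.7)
The plan is to reduce both parts to the potential characterization in Proposition~\ref{prop:potential} and then exploit convexity plus the Wardrop condition.

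\textbf{Existence.} First I would observe that $\Phi(\cdot,u)$ is continuous on $\mathcal{F}$: the integrals $\int_0^{f_e}c_e(x)\,dx$ and $\int_0^{f_v}c_v(y)\,dy$ are continuous in $f_e$ and $f_v$ by continuity of $c_e$ and $c_v$ (Assumption~\ref{assm:cost}), and $f\mapsto f_e$, $f\mapsto f_v$, $f\mapsto f\cdot u$ are linear in $f$. The feasible set $\mathcal{F}$ is the closed unit simplex, hence compact and convex. By the Weierstrass theorem, $\Phi(\cdot,u)$ attains its minimum on $\mathcal{F}$, so by Proposition~\ref{prop:potential} an equilibrium flow exists.

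\textbf{Essential uniqueness.} Let $f$ and $\bar{f}$ be two equilibrium flows, hence two minimizers of $\Phi(\cdot,u)$. Each $\int_0^{f_e}c_e(x)\,dx$ is convex in $f_e$ (since $c_e$ is non-decreasing), composed with the linear map $f\mapsto f_e$; the node terms are convex analogously; the incentive term $f\cdot u$ is linear. Thus $\Phi(\cdot,u)$ is convex, and every point on the segment $f_\lambda:=\lambda f+(1-\lambda)\bar{f}$ is also a minimizer. The map $\lambda\mapsto\Phi(f_\lambda,u)$ is therefore constant on $[0,1]$. Since each edge/node integral is convex in $\lambda$ and the incentive term is affine in $\lambda$, constancy forces each convex summand to be affine along the segment. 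For $\int_0^{f_{\lambda,e}}c_e(x)\,dx$ (a convex function of $f_{\lambda,e}$, itself linear in $\lambda$) this requires $c_e$ to be constant on the closed interval between $f_e$ and $\bar{f}_e$; in particular $c_e(f_e)=c_e(\bar{f}_e)$. The same argument yields $c_v(f_v)=c_v(\bar{f}_v)$ for every node.

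\textbf{From equal marginal costs to equal social costs.} With edge and node costs agreeing on $f$ and $\bar{f}$, definition \eqref{eq:pathcost} gives $c_p(f,u)=c_p(\bar{f},u)$ for every path $p$. Let $L:=\min_{p\in\mathcal{P}}c_p(f,u)$; by the agreement this equals $\min_{p\in\mathcal{P}}c_p(\bar{f},u)$. By the Wardrop condition \eqref{eq:equilibrium}, every path with $f_p>0$ has cost exactly $L$, and similarly for $\bar{f}$. Hence
\[
C(f,u)=\sum_{p\in\mathcal{P}}f_p\,c_p(f,u)=L\sum_{p\in\mathcal{P}}f_p=L
=L\sum_{p\in\mathcal{P}}\bar{f}_p=\sum_{p\in\mathcal{P}}\bar{f}_p\,c_p(\bar{f},u)=C(\bar{f},u),
\]
using $\sum_p f_p=\sum_p\bar{f}_p=1$, which is the claimed essential uniqueness.

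The main obstacle is the second step: the potential $\Phi$ is convex but not strictly convex (because $c_e,c_v$ need only be non-decreasing, and the incentive term is linear in $f$), so one cannot hope for uniqueness of $f$ itself. The key insight that makes the argument go through is that the linear incentive term contributes zero curvature, so the equality of $\Phi$ along the segment transfers \emph{all} the non-strict convexity to the edge/node integrals and forces the marginal costs $c_e(f_e)$ and $c_v(f_v)$—and through them every path cost—to coincide, after which the Wardrop condition collapses the social cost to the common minimum path cost.
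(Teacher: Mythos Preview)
Your proposal is correct and follows essentially the same route as the paper's proof: existence via Weierstrass on the compact simplex using Proposition~\ref{prop:potential}, and essential uniqueness by arguing that convexity of $\Phi(\cdot,u)$ makes the segment between two minimizers consist entirely of minimizers, forcing each edge/node integral to be affine along the segment and hence $c_e(f_e)=c_e(\bar f_e)$, $c_v(f_v)=c_v(\bar f_v)$, after which equal path costs and the Wardrop condition collapse both social costs to the common minimum path cost. If anything, you are slightly more explicit than the paper in justifying why constancy of the sum forces each convex summand to be affine (a sum of convex functions that is affine must have each summand affine, since the two convexity inequalities add to an equality).
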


The proof (see Appendix \ref{app:existence}) follows from an extension of standard results in classical routing games. The essential uniqueness of equilibrium flows implies that from the worst-case to the best-case, all equilibrium improvements lead to the same problem formulation in Problem \ref{prob:main}.
  
\subsection{A Bilevel Optimization Problem}

Proposition \ref{prop:potential} yields that in the absence of incentives, the drivers' decision-making is captured by the potential function:
\be\label{eq:basepotential}
\Phi^o(f) := \sum_{e\in E} \int_0^{f_e}c_e(x)dx + \sum_{v\in V}\int_0^{f_v}c_v(y)dy.
\ee
In other words, the corresponding equilibrium flows are the minimizer of $\Phi^o(f)$. When we add the incentives $u$, the equilibrium flows become the minimizer of $\Phi(f,u) = \Phi^o(f) + f\cdot u$ and all lead to the same social cost. 
Based on this observation, Problem \ref{prob:main} can be cast as the following bilevel optimization problem.

\begin{problem}[Bilevel Optimization Problem] \label{prob:bilevel}
Solve
\be\label{eq:SE}
\min_{u\in \mathcal{U}}\{C^o(f(u)) + f(u)\cdot u\},
\ee
subject to the equilibrium condition
\be\label{eq:SE_constraint}
f(u) \in \arg\min_{\tilde{f}\in\mathcal{F}} \{\Phi^o(\tilde{f}) + \tilde{f}\cdot u\}.
\ee
\end{problem}

\begin{remark}
In the absence of essential uniqueness (Theorem~\ref{thm:existence}), the bilevel optimization problem becomes ambiguous due to the presence of multiple equilibria with potentially different social costs. In such cases, the upper-level objective depends on (worst-case) equilibrium selection, which significantly complicates computation and limits scalability.
\end{remark}

At the lower level, for any fixed incentive vector $u$, drivers adjust their route choices to minimize their perceived cost, leading to a Wardrop equilibrium characterized by the minimization of $\Phi^o(f) + f \cdot u$. At the upper level, a central planner selects $u$ to minimize the overall social cost, taking into account that the resulting flow $f(u)$ arises as the equilibrium outcome of the lower-level problem. This interaction naturally yields a bilevel optimization structure capturing the coupling between system-level control and user behavior.

\section{Experimental Analyses}\label{sec:examples}

In this section, we demonstrate the effectiveness of our proposed timestamp modification scheme on Braess network and Sioux Falls transportation network. We further experimentally validate the additive path-dependent cost model.

\subsection{Braess Network}

The Braess network shown in Fig. \ref{fig:notseries} features three paths: $\mathcal{P}=\{p_1=se_1ve_3t, p_2=se_1ve_5we_4t, p_3=se_2we_4t \}$. We consider quadratic and quartic cost models for the edges $e_1$ and $e_4$, defined as follows: 
\begin{itemize}
\item Quadratic: $c(f) = f-0.5 f^2$,
\item Quartic: $c(f) = 0.5 f - 0.5 f^2 - f^3 + 2 f^4$.
\end{itemize}
Base node costs are $c_v(f_v) = f_v$ and $c_{w}(f_w) = f_w$. Timestamp modifications are constrained to $u_v^p,u_w^p\in [0,0.2]$, allowing delays \textit{only}. To solve the bilevel optimization problem (Problem \ref{prob:bilevel}), we employed several global optimization algorithms: simplicial homology global optimization algorithm (SHGO) \citep{ref:Endres18}, differential evolution (DE) \citep{ref:Storn97}, and dual annealing (DA) \citep{ref:Xiang97}. All algorithms converged to the optimal incentive structure $(u_v^{p_1},u_v^{p_2},u_w^{p_2},u_w^{p_3}) = (0,0.2,0.2,0)$, effectively delaying only path $p_2$, as illustrated in Fig. \ref{fig:optimal}.

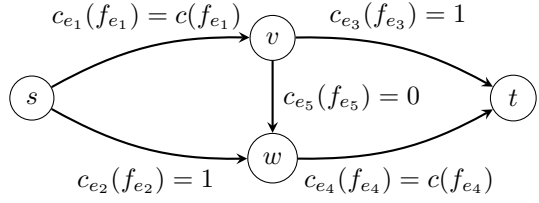
\begin{figure}[t!]
    \centering
\begin{tikzpicture}[scale=0.8, >=stealth, node distance=1.7cm, font=\small]
\node[draw, circle](s) at (0,0){$s$};
\node[draw, circle](v) at (4,1){$v$};
\node[draw, circle](w) at (4,-1){$w$};
\node[draw, circle](t) at (8,0){$t$};
\draw[->, thick](s) edge[bend left=15] node[midway, above=.1]{$c_{e_1}(f_{e_1}) = c(f_{e_1})$} (v);
\draw[->, thick](s) edge[bend right=15] node[midway, below=.1]{$c_{e_2}(f_{e_2}) = 1$} (w);
\draw[->, thick](v) edge[bend left=15] node[midway, above=.1]{$c_{e_3}(f_{e_3}) = 1$} (t);
\draw[->, thick](w) edge[bend right=15] node[midway, below=.1]{$c_{e_4}(f_{e_4}) = c(f_{e_4})$} (t);
\draw[->, thick](v)--(w) node[midway, right]{$c_{e_5}(f_{e_5}) = 0$};
\end{tikzpicture}
\caption{A Braess network.}\label{fig:notseries}
\label{graph:braess}
\end{figure}

\begin{table}[t!]
\centering
 \renewcommand{\arraystretch}{1.5}
 \caption{Flows $(f_{p_1},f_{p_2},f_{p_3})$ and social costs with and without incentives for the Braess network.}\label{tab:results}
\begin{tabular}{p{5em}|p{7.5em}|p{2em}|p{7.5em}|p{2em}}
\multirow{2}{3em}{\bf Cost Model} & \multicolumn{2}{|p{9.5em}|}{\bf Quadratic} &  \multicolumn{2}{|p{9.5em}}{\bf Quartic} \\\cline{2-5}
 & Flow  & Cost & Flow & Cost \\\hline\hline
Equilibrium without Incentives & $(0.41,0.17,0.41)$ & $2$ & $(0.34,0.31,0.34)$ & $2$   \\\hline
Optimal without Incentives & $(0.5,0,0.5)$ & $1.87$ & $(0.5,0,0.5)$ & $1.69$  \\\hline
Equilibrium with Incentives & $(0.5,0,0.5)$ & $1.87$ & $(0.5,0,0.5)$ & $1.69$
\end{tabular}
\end{table}

Table~\ref{tab:results} summarizes the equilibrium and socially optimal flows with and without incentives for each cost model. The results demonstrate optimal flow and cost under equilibrium when incentives are applied.

\subsection{Experimental Analyses on the Sioux Falls Network}
\label{subsec:sioux-falls-experiment}

We next focus on extensive experimental analyses in realistic microscopic traffic simulations.  
To this end, we use the Sioux Falls transportation network (24 nodes, 76 links, 552 OD pairs), a well-established benchmark in transportation engineering and algorithmic game theory, e.g., see \citep{ref:LeBlanc75}.  
Fig.~\ref{fig:sioux-overview} shows the network layout.  
The benchmark provides a realistic medium-scale topology with empirically calibrated BPR edge cost functions.  
However, the dataset in \citet{ref:LeBlanc75} does not include intersection (node) delays.  
To address this gap and ground the theoretical model in realistic data, we construct intersection-level simulations using the Eclipse \textbf{S}imulation of \textbf{U}rban \textbf{MO}bility (SUMO) platform \citep{ref:Lopez18}—an open-source microscopic simulator widely used to model vehicle dynamics, signal control, and queuing effects.

\begin{figure}[t!]
  \centering
  \begin{subfigure}[b]{0.48\textwidth}
    \centering
    \includegraphics[height=2.5in]{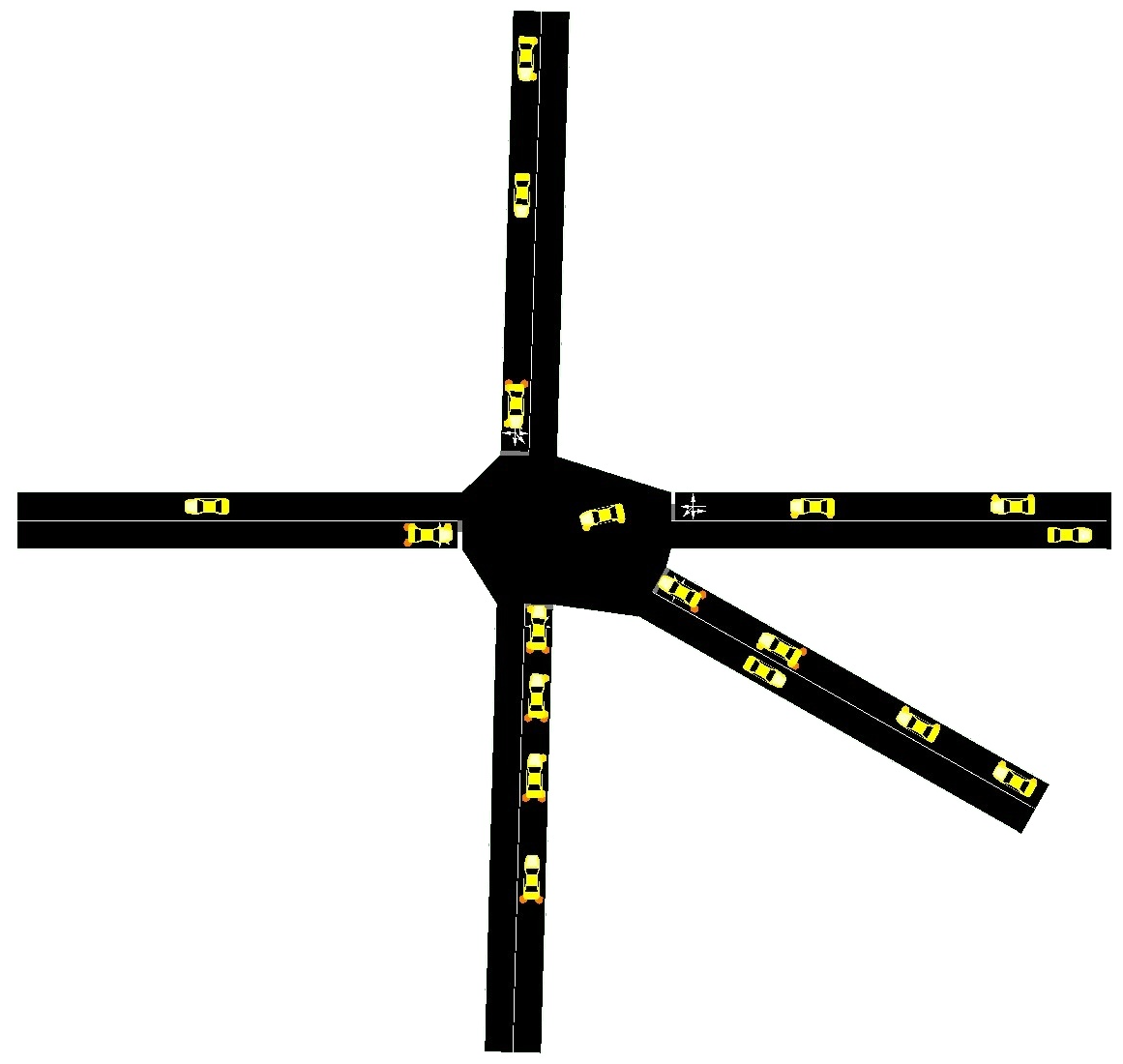}
    \caption{{SUMO snapshot for Node 10 (25\,m control zone).}}
    \label{fig:sumo-node10}
  \end{subfigure}\quad
  \begin{subfigure}[b]{0.48\textwidth}
    \centering \includegraphics[height=2.5in]{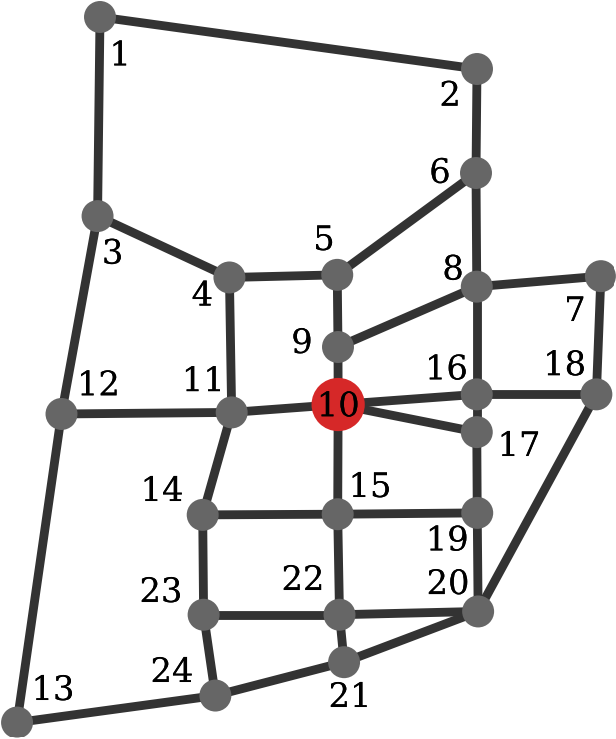}
    \caption{{Sioux Falls network (24 nodes, 76 links).}}
    \label{fig:sioux-overview}
  \end{subfigure}
  \caption{{Microscopic–macroscopic linkage: (a)  example intersection simulated in the SUMO, (b) full network.}}
  \label{fig:sumo-network-fit-triad}
\end{figure}

Each of the 24 intersections is modeled as an independent SUMO scenario replicating its geometry, lane configuration, and signal policy.  
Vehicle departures follow a Poisson process, introducing stochasticity that captures natural variations in vehicle arrivals.  
For each run, we measure vehicle travel times from a point 25\,m upstream of the stop line—corresponding to the control zone of RSU—until the vehicle fully exits the intersection.  
This measurement captures both queuing and traversal delays.  
Averaging across randomized rollouts yields smooth and repeatable estimates of node traversal time as a function of congestion.  
An example SUMO environment for Node~10 is shown in Fig.~\ref{fig:sumo-node10}.

\textit{Base node-cost fitting.}  
The measured intersection delays are fitted to \textit{quartic} polynomials for each of the 24 intersections.  
This choice maintains consistency with the classical BPR edge-cost model, which also exhibits a fourth-order dependence on flow, enabling a unified treatment of link and node congestion. Table~\ref{tab:node_cost_coeffs} in Appendix \ref{app:sumo-details} lists all coefficients and mean absolute percentage errors (MAPE) for all nodes.  
Across all nodes, the quartic models achieve high accuracy (MAPE typically 1–2\%) while ensuring Assumption \ref{assm:cost} holds.

\textit{Incentive validation experiment.}  
To validate the analytical assumption that timestamp-based incentives act as additive constants in node costs, we conduct additional SUMO experiments on representative intersection, node 10.  
For one of the paths, we implement timestamp modifications directly at the RSU level.  
Specifically, we adjust vehicle request times by $\pm 10$ sec to represent the effect of (i) a positive incentive (\emph{timestamp advancement}) or (ii) a negative incentive (\emph{timestamp delay}).  
These offsets correspond to small scheduling shifts consistent with realistic RSU control capabilities.

In general, multiple vehicles following different paths may coexist on the same lane.  
To prevent unrealistic reordering or collision of timestamp schedules, we consider a local timestamp‐difference constraint:  
when delaying a vehicle, if the time gap between that vehicle and the one immediately behind it exceeds 10 s, the full 10 s delay is applied; otherwise, the delay is bounded by the available gap (e.g., 6 s if the next vehicle’s timestamp is only 7 s later).  
Conversely, when expediting a vehicle, if the timestamp gap to the vehicle in front is larger than 10 s, the full 10 s advancement is applied; if the gap is smaller, we advance the timestamp only up to that gap.  
This rule ensures that the applied incentives preserve the temporal ordering of the RSU’s scheduling policy.

\begin{figure}[t!]
    \centering   \includegraphics[width=.7\linewidth]{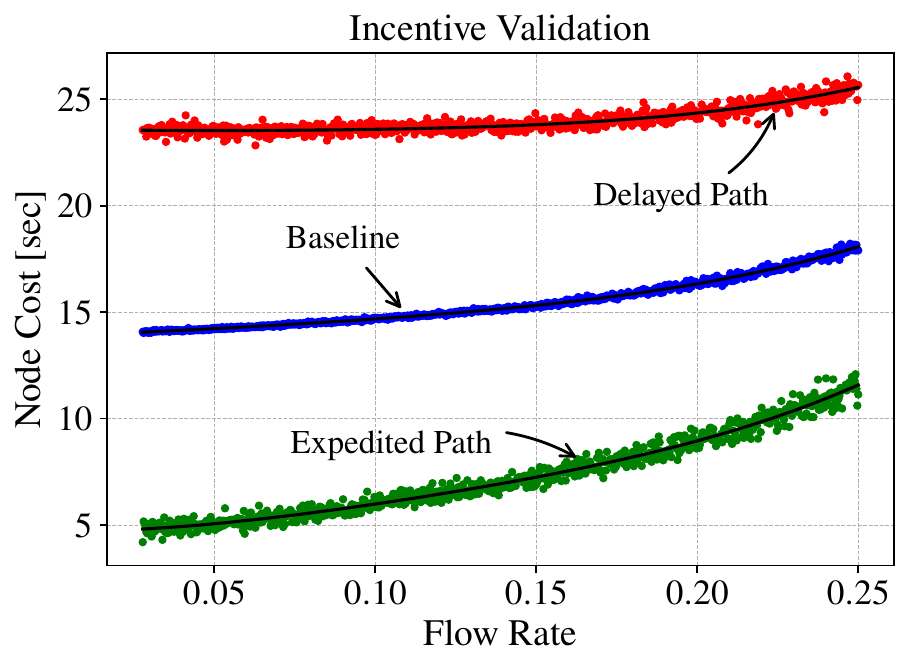}
    \caption{Average travel time (second) versus flow rate (vehicles per second) within the control zone and the flow rate for the simulated Node 10 using the FCFS protocol under no, positive and negative timestamp modifications for one of the paths.}
    \label{fig:node10-fit}
\end{figure}

Fig.~\ref{fig:node10-fit} shows the resulting empirical node‐cost fittings under positive, negative and, no incentives.  
Both perturbed curves remain approximately parallel to the baseline under no incentive case, confirming that timestamp adjustments primarily shift the node‐cost function vertically without altering its functional dependence on flow.  
This empirically verifies that $u_v^p$ acts as an additive offset in the cost model, consistent with the theoretical formulation $c_v^p(f_v, u_v^p) = c_v(f_v) + u_v^p$.

\textit{Bi-level optimization.}  
We integrate the SUMO-calibrated quartic node costs with the BPR edge costs. The lower-level user equilibrium (UE) is computed using the Frank–Wolfe algorithm, while the upper-level planner optimizes timestamp incentives via the Simultaneous Perturbation Stochastic Approximation (SPSA) method \citep{ref:Spall92}.  
Two bounded incentive regimes are examined to evaluate both theoretical scalability and practical feasibility:  
(i) a \emph{high-intervention} case with $u_v^p \in [0,2]$ min, allowing high timestamp offsets up to 2 min, and  
(ii) a \emph{low-intervention} case with $u_v^p \in [0,0.5]$ min (i.e., up to 30 sec), allowing minimal timestamp offsets. These dual settings allow assessment of how incentive magnitude influences network-level efficiency improvements while preserving physical realism and RSU timing constraints.  
All simulation and algorithmic parameters are detailed in Appendix~\ref{app:sumo-details}.

\begin{figure}[t!]
\centering 
\vspace{.065in}
\includegraphics[width=0.7\linewidth]{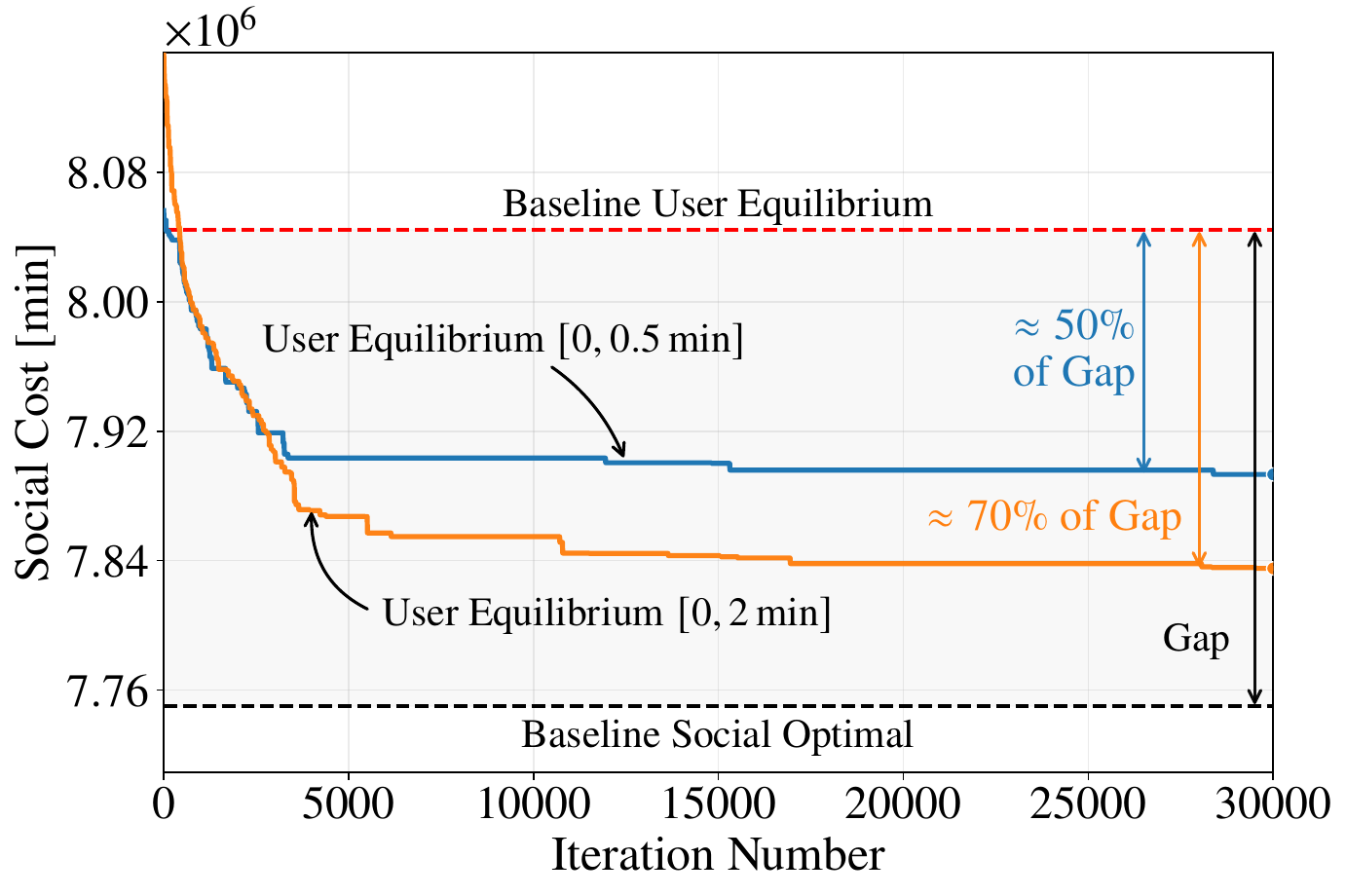} \caption{Convergence of computed social costs in SPSA with different incentive bounds.} \label{fig:resultsa} \end{figure}

\begin{figure}[t!]
  \centering
  \begin{subfigure}[b]{0.49\linewidth}
    \centering
    \includegraphics[width=\linewidth]{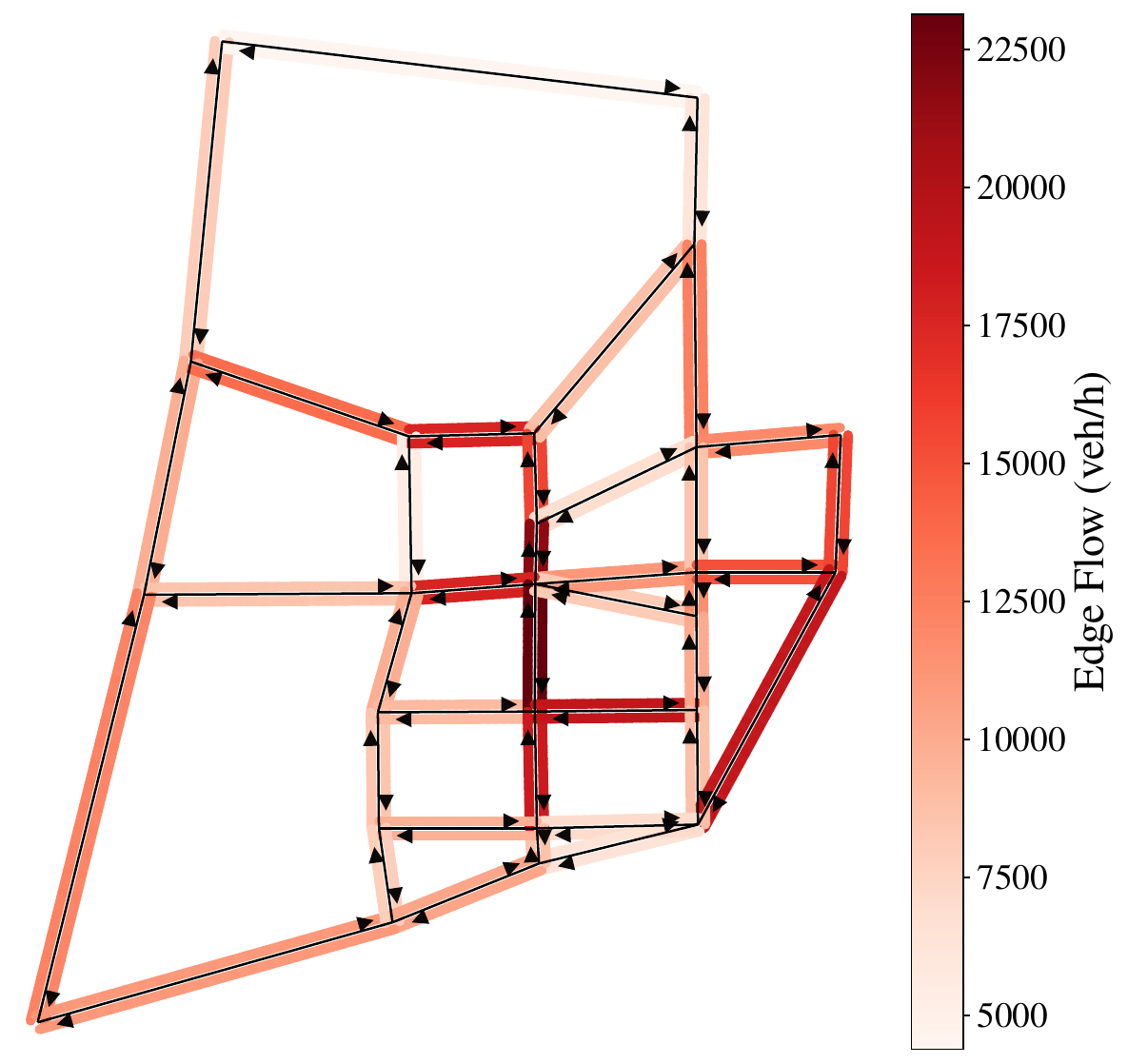}
    \caption{Baseline.}
    \label{fig:flow-heatmap-baseline}
  \end{subfigure}\hfill
  \begin{subfigure}[b]{0.49\linewidth}
    \centering
    \includegraphics[width=\linewidth]{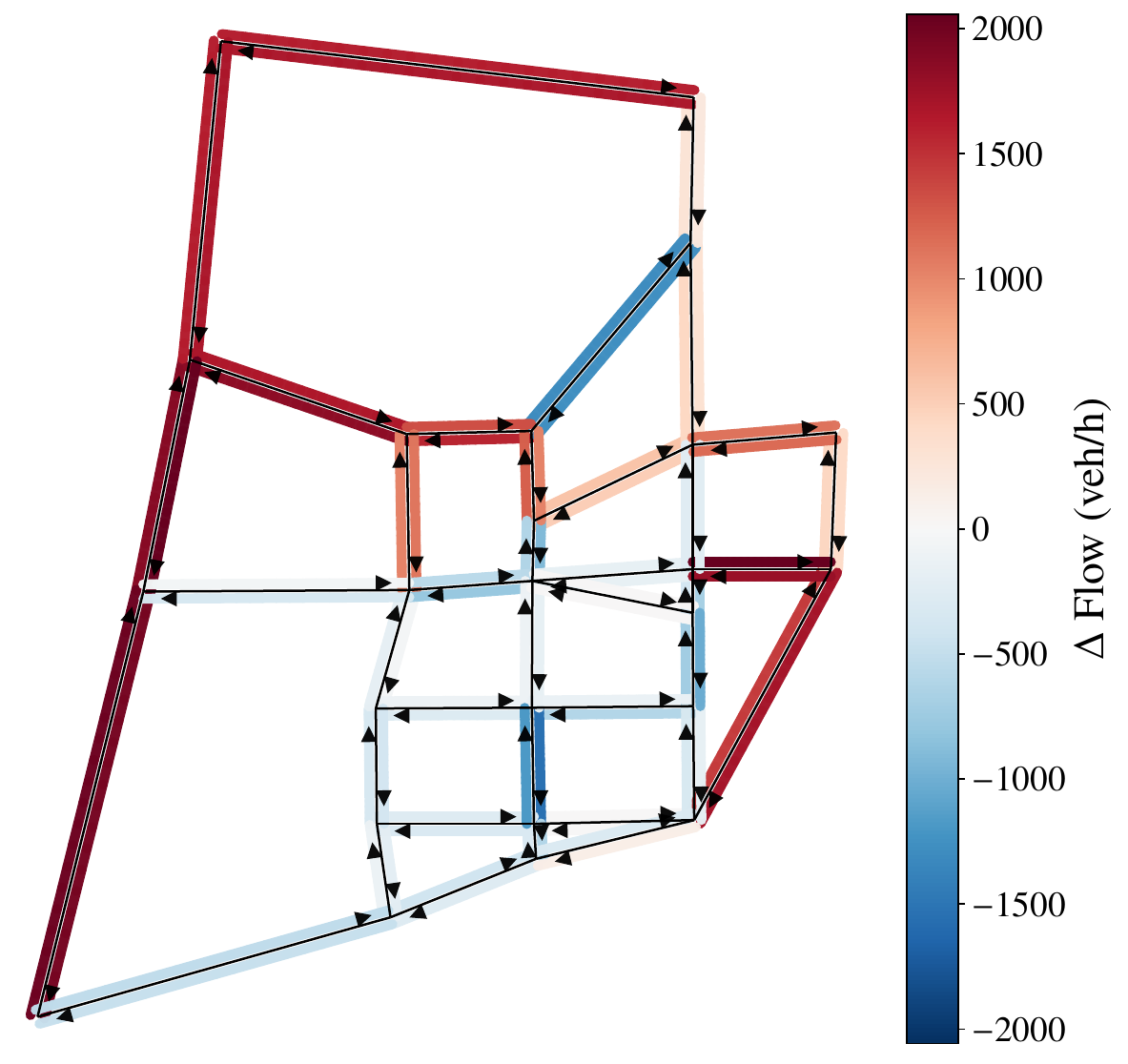}
    \caption{Difference.}
    \label{fig:flow-heatmap-diff}
  \end{subfigure}

  \caption{Comparison of network flow distributions in the Sioux Falls benchmark.  
  (a) Baseline user–equilibrium (UE) showing congestion concentration on central corridors.  
  (b) Flow redistribution after timestamp-based incentives, where red (and blue) edges depict increased (and decreased) flows.}
  \label{fig:resultsb}
\end{figure}

Fig.~\ref{fig:resultsa} illustrates the convergence of the social cost throughout SPSA iterations for both incentive ranges.  
The incentivized user equilibria—one with timestamp adjustments constrained to $[0,2]$\,min and another to $[0,0.5]$\,min—both progressively approach the system–optimal benchmark (black dashed), while the baseline user equilibrium (red dashed) remains substantially higher.  
The high-intervention case achieves lower social cost, whereas the low-intervention case incentives still recover most of the efficiency improvement, demonstrating effectiveness of the mechanism even under tight operational limits. In Fig.~\ref{fig:resultsa}, the starting points of the convergence plots are different due to their randomized initializations.

The baseline UE cost is $8.04 \times 10^6$ min, while the system–optimal benchmark achieves $7.75 \times 10^6$ min.  
Under the high-intervention case, the incentivized equilibrium reaches $7.84 \times 10^6$ min, closing $
71.1\%$
of the UE–to–SO optimality gap.  
When the incentive range is restricted to $[0,0.5]$ min, the incentivized equilibrium achieves $7.89 \times 10^6$ minutes, closing $51.3\%$ of the optimality gap.  
These results demonstrate that timestamp-based intersection incentives recover a substantial fraction of the attainable efficiency improvement even under tight operational limits.  
While broader incentive ranges yield stronger convergence and higher efficiency, the 30-sec case still captures a significant amount of the benefit, confirming the \textit{practical viability and robustness} of the proposed bilevel mechanism under realistic traffic-control constraints.

Fig.~\ref{fig:resultsb} compares baseline and incentivized flow distributions for the Sioux Falls network under the $[0,2]$-min incentive range.  
Fig.~\ref{fig:flow-heatmap-baseline} shows the baseline UE, where heavy congestion is concentrated on central corridors.  
Fig.~\ref{fig:flow-heatmap-diff} depicts the corresponding flow differences after incentives, with red links indicating increased usage and blue links decreased usage.  
These heatmaps illustrate how timestamp-based incentives alleviate pressure on over-saturated central routes, diverting part of the demand toward peripheral alternatives and thereby achieving a more balanced utilization across the network.

\section{Conclusion}\label{sec:conclusion}

This paper proposes a non-monetary control framework for improving traffic efficiency in urban networks via timestamp modifications at autonomous intersections. By decoupling incentive design from intersection scheduling in a two-layer architecture, localized control can guide self-interested drivers toward socially efficient flows without tolling. Our timestamp-modification-based prioritization scheme yields additive cost adjustments, establishing existence and essential uniqueness of equilibrium flows. The essential uniqueness is critical for scalability by mitigating equilibrium multiplicity and enabling a computationally tractable bilevel optimization framework. Numerical experiments demonstrate substantial improvements in efficiency in realistic traffic scenarios. These results highlight the potential of non-monetary, infrastructure-light control for scalable deployment in next-generation intelligent transportation and urban mobility systems.

\textit{Future Research Directions.}
Future work includes integrating monetary or information-based mechanisms to balance efficiency and equity, developing adaptive and real-time implementations (e.g., via reinforcement learning or online optimization), incorporating heterogeneous behavioral responses such as partial compliance, and extending the framework to mixed-autonomy settings with both human-driven and autonomous vehicles.

\appendix

\section{Proof of Proposition \ref{prop:potential}}\label{app:potential}

Consider a non-negative, continuous and non-decreasing function $c(\cdot)$. Then, the mapping $x\mapsto \int_0^x c(s)\,ds$ is continuously differentiable since $\frac{d}{dx}\int_0^x c(s)ds = c(x)$ by the fundamental theorem of calculus. Furthermore, the mapping is convex since
\begin{flalign*}
\int_x^y c(s)\,ds \geq c(x)(y-x) = \left(\frac{d}{dx}\int_0^x c(s)ds \right)(y-x).
\end{flalign*}
Therefore, Assumption \ref{assm:cost} yields that the mappings
\begin{flalign}
x\mapsto \int_0^x c_e(s)\,ds \quad \text{and} \quad y\mapsto \int_0^y c_v(s)\,ds
\end{flalign}
are convex and continuously differentiable. Moreover, $f\cdot u$ is linear. This yields that for a given incentive profile $u$, $\Phi(f,u)$ is a convex and continuously differentiable function of $f\in\mathcal{F}$.

For the equivalence, we prove the following two claims.

\begin{claim}\label{claim:forward}
For a given incentive profile $u$, if \( f \) is an optimal solution of \eqref{eq:potential_min}, then it is an equilibrium flow.
\end{claim}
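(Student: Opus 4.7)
The plan is to prove Claim \ref{claim:forward} by invoking first-order optimality conditions for the minimization of $\Phi(\cdot, u)$ over the simplex $\mathcal{F}$. The paragraph immediately preceding the claim already establishes that $\Phi(\cdot, u)$ is convex and continuously differentiable in $f$, so the KKT conditions for this convex program are both necessary and sufficient for optimality, and I want to show that those conditions are precisely the Wardrop inequalities in Definition \ref{def:equilibrium}.

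The first step is to compute the partial derivative of $\Phi$ with respect to each path flow $f_p$. Using the fundamental theorem of calculus together with the linear dependencies $\partial f_e/\partial f_p = \mathbf{1}\{e\in p\}$ and $\partial f_v/\partial f_p = \mathbf{1}\{v\in p\}$, and noting that $f\cdot u=\sum_{p'} f_{p'} u_{p'}$ is linear in $f$, one obtains
$$\frac{\partial \Phi}{\partial f_p}(f,u) = \sum_{e\in p} c_e(f_e) + \sum_{v\in p} c_v(f_v) + u_p,$$
which coincides with the path cost $c_p(f,u)$ in \eqref{eq:pathcost} since $u_p=\sum_{v\in p} u_v^p$.

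Next, I would write down the KKT conditions for the problem $\min_{f\in\mathcal{F}} \Phi(f,u)$. Introducing a multiplier $\lambda\in\mathbb{R}$ for the equality $\sum_p f_p=1$ and nonnegative multipliers $\mu_p$ for $f_p\geq 0$, stationarity together with complementary slackness read $c_p(f,u)=\lambda+\mu_p$ and $\mu_p f_p=0$ for every path $p$. Equivalently, $c_p(f,u)\geq \lambda$ for all $p$, with equality whenever $f_p>0$. Comparing any two paths $p,\op$ with $f_p>0$ then yields $c_p(f,u)=\lambda\leq c_{\op}(f,u)$, which is exactly the equilibrium condition \eqref{eq:equilibrium}.

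The only delicate point is the chain-rule bookkeeping for $\partial\Phi/\partial f_p$, since every other path sharing an edge or node with $p$ contributes to $f_e$ or $f_v$; however, this dependence is linear in $f$, and the indicator identities reduce the computation to inspection. No assumptions beyond Assumption \ref{assm:cost} are needed, as that hypothesis has already been used to secure the convexity and $C^1$ smoothness of $\Phi$ that make the KKT characterization tight.
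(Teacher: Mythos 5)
Your proposal is correct and follows essentially the same route as the paper: it applies the KKT conditions to the convex program $\min_{f\in\mathcal{F}}\Phi(f,u)$, identifies $\partial\Phi/\partial f_p$ with the path cost $c_p(f,u)$, and reads off the Wardrop inequalities from stationarity and complementary slackness. No gaps.
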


\begin{proof}
Since the objective is convex and the constraints are affine, regularity conditions hold and the Karush--Kuhn--Tucker (KKT) conditions are both necessary and sufficient for optimality. Define the Lagrangian
\begin{flalign}
\mathcal{L}(f,\lambda,\mu)=\Phi(f, u)-\lambda\Bigl(\sum_{p\in \mathcal{P}}f_p-1\Bigr)-\sum_{p\in \mathcal{P}}f_p \cdot \mu_p.
\end{flalign}
Differentiating \(\Phi(f, u)\) with respect to \( f_p \),
we obtain
\begin{flalign}
\frac{\partial \Phi(f, u)}{\partial f_p}=\sum_{e\in p} c_e(f_e)+\sum_{v\in p} c_v^p(f_v, u_v^p) = c_p(f, u).
\end{flalign}
Thus, the stationarity condition reads
\begin{flalign}\label{eq:stationarity}
c_p(f, u)-\lambda-\mu_p=0,\quad \forall\, p\in \mathcal{P}.
\end{flalign}
Complementary slackness implies that if \( f_p>0 \), then \(\mu_p=0\) so that $c_p(f, u)=\lambda$, and if \( f_p=0 \), then \(\mu_p\ge0\) so that $c_p(f, u)=\lambda+\mu_p\ge \lambda$.
They correspond to the Wardrop conditions, as described in Definition \ref{def:equilibrium}. This proves that \( f \) is an equilibrium flow. 
\end{proof}

\begin{claim}\label{claim:backward}
For a given incentive profile $u$, if \( f \) is an equilibrium flow, then it is an optimal solution of \eqref{eq:potential_min}.
\end{claim}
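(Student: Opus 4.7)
The plan is to invert the KKT argument used in Claim~\ref{claim:forward}. Since we have already established that $\Phi(\cdot, u)$ is convex and continuously differentiable over the polyhedral (hence convex) feasible set $\mathcal{F}$, the KKT conditions are \emph{sufficient} for global optimality, not just necessary. So it suffices to exhibit Lagrange multipliers $\lambda \in \mathbb{R}$ and $\mu = (\mu_p)_{p \in \mathcal{P}} \geq 0$ satisfying stationarity and complementary slackness at $f$.

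Starting from the equilibrium hypothesis in Definition~\ref{def:equilibrium}, I would first observe that all paths carrying positive flow share a common cost. More precisely, fix any path $p^\star$ with $f_{p^\star} > 0$ (which exists by feasibility $\sum_p f_p = 1$) and set $\lambda := c_{p^\star}(f, u)$. The Wardrop condition \eqref{eq:equilibrium} then forces $c_p(f, u) = \lambda$ for every $p$ with $f_p > 0$, and $c_p(f, u) \geq \lambda$ for every $p$ with $f_p = 0$. Defining $\mu_p := c_p(f, u) - \lambda$ yields $\mu_p \geq 0$ for all $p$ and $\mu_p = 0$ whenever $f_p > 0$, which is exactly complementary slackness.

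Next, I would verify stationarity. The gradient computation from the proof of Claim~\ref{claim:forward} gives
\begin{flalign}
\frac{\partial \Phi(f, u)}{\partial f_p} = c_p(f, u) = \lambda + \mu_p,
\end{flalign}
so the identity \eqref{eq:stationarity} holds by construction. Together with the primal feasibility of $f \in \mathcal{F}$ and the dual feasibility $\mu_p \geq 0$, all KKT conditions are satisfied. By convexity of $\Phi(\cdot, u)$ and convexity of $\mathcal{F}$, these conditions certify that $f$ is a global minimizer of $\Phi(\cdot, u)$ over $\mathcal{F}$, establishing \eqref{eq:potential_min}.

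I do not anticipate a substantive obstacle here: the heavy lifting (convexity and differentiability of $\Phi$, and the gradient identification $\partial \Phi / \partial f_p = c_p(f, u)$) has already been carried out in the preamble and in Claim~\ref{claim:forward}. The only subtle point is the need to exhibit \emph{some} used path to pin down $\lambda$, which is immediate from $\sum_p f_p = 1 > 0$. An alternative, equally short route is to bypass multipliers entirely and use the first-order optimality characterization for convex programs: for any $g \in \mathcal{F}$,
\begin{flalign}
\nabla_f \Phi(f, u) \cdot (g - f) = \sum_{p \in \mathcal{P}} c_p(f, u)(g_p - f_p) \geq \lambda \sum_{p}(g_p - f_p) = 0,
\end{flalign}
where the inequality uses $c_p(f, u) \geq \lambda$ on $\{p : g_p \geq 0\}$ together with $c_p(f, u) = \lambda$ wherever $f_p > 0$, and the final equality uses $\sum_p g_p = \sum_p f_p = 1$. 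Either route closes the claim.
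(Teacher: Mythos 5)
Your proposal is correct and matches the paper's own proof of this claim: both construct the multiplier $\lambda$ as the common cost of used paths and $\mu_p = c_p(f,u)-\lambda \ge 0$, then invoke sufficiency of the KKT conditions for the convex program. The alternative variational-inequality route you sketch is also valid but unnecessary here.
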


\begin{proof}
Assume that \( f \) is an equilibrium flow with equilibrium cost \(\lambda\), i.e., $c_p(f, u)=\lambda$ if $f_p>0$ and $c_p(f, u)\geq \lambda$ if $f_p=0$. Define multipliers by setting \(\mu_p=0\) for \( f_p>0 \) and \(\mu_p=c_p(f, u)-\lambda\ge0\) for \( f_p=0 \). Then the KKT stationarity condition \eqref{eq:stationarity} is satisfied. Together with primal feasibility, dual feasibility, and complementary slackness, the KKT conditions hold. Since these conditions are sufficient for optimality in convex problems, it follows that \( f \) is an optimal solution. 
\end{proof}

Claims \ref{claim:forward} and \ref{claim:backward} complete the proof of Proposition~\ref{prop:potential}.

\section{Proof of Theorem \ref{thm:existence}}\label{app:existence}

 The flow conservation condition $\mathcal{F}$, as described in \eqref{eq:feasible}, is a compact subset of \(\mathbb{R}^{|\mathcal{P}|}\) and for a given incentive profile $u$, \(\Phi(f, u)\) is continuous on \(\mathcal{F}\) by the continuity of \(c_e(\cdot)\) and \(c_v(\cdot)\). Therefore, the Weierstrass Extreme Value Theorem ensures that \(\Phi(\cdot)\) attains its minimum over \(\mathcal{F}\) for any given $u$. To establish essential uniqueness, let \( f \) and \(\tilde{f}\) be two equilibrium flows associated with $u$. Since both $f$ and $\tilde{f}$ minimize \(\Phi(f)\) by Proposition~\ref{prop:potential} and \(\Phi\) is convex, we have
\begin{flalign}
\Phi(f,u)
&\leq \alpha\Phi(f,u)+(1-\alpha)\Phi(\tilde{f},u)
=\Phi(f,u)
\end{flalign}
for all $\alpha\in[0,1]$. The equality for all \(\alpha\) ensures that the integrals \(\int_0^{f_e} c_e(x)\,dx\) and \(\int_0^{f_v} c_v(y)\,dy\) are linear between \( f_e, \tilde{f}_e \) and \( f_v, \tilde{f}_v \), respectively. This implies that their derivatives—corresponding to the marginal costs—are equal.
Then, \eqref{eq:pathcost} yields that
\be\label{eq:thesamepathcost}
c_p(f,u) = c_p(\tilde{f},u)\quad\forall p\in \mathcal{P}.
\ee
Note that the equilibrium condition implies that each path with positive flow under $f$ or $\tilde{f}$ has \textit{the same and the smallest} path costs, i.e., for some $\lambda$ and $\tilde{\lambda}$, we have 
\begin{subequations}
\begin{flalign}
\lambda = c_p(f,u) \leq c_{\bar{p}}(f,u)\quad\forall p,\bar{p}\mbox{ such that }f_p>0,\\
\tilde{\lambda} = c_p(\tilde{f},u) \leq c_{\bar{p}}(\tilde{f},u)\quad\forall p,\bar{p}\mbox{ such that }\tilde{f}_p>0.
\end{flalign}
\end{subequations}
Then, \eqref{eq:thesamepathcost} yields that $\lambda=\tilde{\lambda}$. The proof is completed since $C(f,u) = \lambda$ and $C(\tilde{f},u) = \tilde{\lambda}$ by \eqref{eq:socialcost} and $f\in\mathcal{F}$.

\section{Experimental Setup and Reproducibility}
\label{app:sumo-details}

\subsection{SUMO Scenario Design}
Each Sioux Falls intersection is modeled as an independent SUMO scenario that preserves its local geometry, signal timing, and incoming/outgoing lanes.  
Vehicles are generated according to a Poisson process with flow rate $\lambda = \frac{N}{T}$, where $N$ is the expected number of departures (ranging from 0 to 900) during the simulation horizon $T=3600\,\text{s}$.  
Vehicle routes are selected uniformly among feasible turn movements.  
For each vehicle, traversal time is measured from 25\,m upstream of the stop line representing the RSU’s control zone until complete exit from the intersection.

\begin{table}[t!]
\centering
\caption{Fitted quartic node cost coefficients from SUMO simulations for the 24 Sioux Falls intersections. 
Quartic models follow $c_v(x) = a_4 x^4 + a_3 x^3 + a_2 x^2 + a_1 x + a_0$. 
Columns list coefficients ($a_0$–$a_4$) and fitting error (MAPE, \%). Rows correspond to the 24 nodes.}
\label{tab:node_cost_coeffs}
\setlength{\tabcolsep}{2pt}
\renewcommand{\arraystretch}{0.9}
\begin{tabular}{rcccccc}
\toprule
     & \multicolumn{5}{c}{\textbf{Quartic: } $y = a_0 + a_1 N + a_2 N^2 + a_3 N^3 + a_4 N^4$} & \\
\midrule
$v$ & {$a_4$} & {$a_3$} & {$a_2$} & {$a_1$} & {$a_0$} & {\%} \\
\midrule
1  & $-1.14\!\times\!10^{-14}$ & $3.68\!\times\!10^{-10}$ & $-2.54\!\times\!10^{-7}$ & $1.91\!\times\!10^{-4}$ & $13.99$ & $0.43$ \\
2  & $-3.91\!\times\!10^{-14}$ & $4.92\!\times\!10^{-10}$ & $-4.53\!\times\!10^{-7}$ & $3.36\!\times\!10^{-4}$ & $13.94$ & $0.42$ \\
3  & $-3.98\!\times\!10^{-12}$ & $1.96\!\times\!10^{-8}$ & $-1.56\!\times\!10^{-5}$ & $5.38\!\times\!10^{-3}$ & $12.96$ & $1.64$ \\
4  & $-2.81\!\times\!10^{-12}$ & $1.74\!\times\!10^{-8}$ & $-1.46\!\times\!10^{-5}$ & $5.41\!\times\!10^{-3}$ & $12.91$ & $1.74$ \\
5  & $-4.78\!\times\!10^{-12}$ & $2.21\!\times\!10^{-8}$ & $-1.81\!\times\!10^{-5}$ & $6.23\!\times\!10^{-3}$ & $12.90$ & $1.69$ \\
6  & $-4.48\!\times\!10^{-12}$ & $2.09\!\times\!10^{-8}$ & $-1.68\!\times\!10^{-5}$ & $5.82\!\times\!10^{-3}$ & $12.89$ & $1.68$ \\
7  & $-5.75\!\times\!10^{-14}$ & $5.56\!\times\!10^{-10}$ & $-5.20\!\times\!10^{-7}$ & $3.62\!\times\!10^{-4}$ & $13.93$ & $0.44$ \\
8  & $6.63\!\times\!10^{-13}$ & $1.02\!\times\!10^{-8}$ & $-1.06\!\times\!10^{-5}$ & $5.08\!\times\!10^{-3}$ & $13.16$ & $1.94$ \\
9  & $-3.35\!\times\!10^{-12}$ & $1.80\!\times\!10^{-8}$ & $-1.43\!\times\!10^{-5}$ & $5.05\!\times\!10^{-3}$ & $12.98$ & $1.67$ \\
10 & $1.40\!\times\!10^{-12}$ & $1.26\!\times\!10^{-8}$ & $-1.45\!\times\!10^{-5}$ & $6.93\!\times\!10^{-3}$ & $13.54$ & $2.21$ \\
11 & $1.84\!\times\!10^{-12}$ & $8.22\!\times\!10^{-9}$ & $-9.92\!\times\!10^{-6}$ & $5.16\!\times\!10^{-3}$ & $13.15$ & $1.83$ \\
12 & $-5.06\!\times\!10^{-12}$ & $2.23\!\times\!10^{-8}$ & $-1.78\!\times\!10^{-5}$ & $6.08\!\times\!10^{-3}$ & $12.89$ & $1.73$ \\
13 & $6.28\!\times\!10^{-15}$ & $3.10\!\times\!10^{-10}$ & $-2.11\!\times\!10^{-7}$ & $2.16\!\times\!10^{-4}$ & $13.95$ & $0.45$ \\
14 & $-3.83\!\times\!10^{-12}$ & $1.91\!\times\!10^{-8}$ & $-1.53\!\times\!10^{-5}$ & $5.47\!\times\!10^{-3}$ & $12.92$ & $1.59$ \\
15 & $4.68\!\times\!10^{-13}$ & $1.12\!\times\!10^{-8}$ & $-1.17\!\times\!10^{-5}$ & $5.39\!\times\!10^{-3}$ & $13.16$ & $1.90$ \\
16 & $-3.58\!\times\!10^{-12}$ & $2.24\!\times\!10^{-8}$ & $-2.15\!\times\!10^{-5}$ & $8.31\!\times\!10^{-3}$ & $12.95$ & $2.12$ \\
17 & $-5.66\!\times\!10^{-12}$ & $2.42\!\times\!10^{-8}$ & $-1.95\!\times\!10^{-5}$ & $6.74\!\times\!10^{-3}$ & $12.87$ & $1.75$ \\
18 & $-5.30\!\times\!10^{-12}$ & $2.39\!\times\!10^{-8}$ & $-1.92\!\times\!10^{-5}$ & $6.50\!\times\!10^{-3}$ & $13.05$ & $1.74$ \\
19 & $-2.05\!\times\!10^{-12}$ & $1.53\!\times\!10^{-8}$ & $-1.27\!\times\!10^{-5}$ & $4.82\!\times\!10^{-3}$ & $12.95$ & $1.63$ \\
20 & $-4.90\!\times\!10^{-12}$ & $3.15\!\times\!10^{-8}$ & $-2.75\!\times\!10^{-5}$ & $9.61\!\times\!10^{-3}$ & $14.50$ & $2.27$ \\
21 & $-5.02\!\times\!10^{-12}$ & $2.29\!\times\!10^{-8}$ & $-1.90\!\times\!10^{-5}$ & $6.70\!\times\!10^{-3}$ & $12.81$ & $1.76$ \\
22 & $3.01\!\times\!10^{-12}$ & $4.78\!\times\!10^{-9}$ & $-6.83\!\times\!10^{-6}$ & $4.19\!\times\!10^{-3}$ & $13.39$ & $1.88$ \\
23 & $-3.66\!\times\!10^{-12}$ & $1.89\!\times\!10^{-8}$ & $-1.53\!\times\!10^{-5}$ & $5.44\!\times\!10^{-3}$ & $12.93$ & $1.68$ \\
24 & $-3.44\!\times\!10^{-12}$ & $1.88\!\times\!10^{-8}$ & $-1.55\!\times\!10^{-5}$ & $5.49\!\times\!10^{-3}$ & $12.94$ & $1.73$ \\
\bottomrule
\end{tabular}
\end{table}

Measured travel times are averaged across 20 rollouts. Each node’s empirical cost data are then fitted to a quartic model using least squares regression. Typical MAPE fall within 1–2\%.  Representative fits for Node~10 are shown in Fig.~\ref{fig:node10-fit} and the computed quartic function coefficients are tabulated in Table \ref{tab:node_cost_coeffs} for $24$ intersections, illustrated in Fig. \ref{fig:sioux-overview}.

\subsection{Bilevel Optimization Configuration}
For the lower-level UE computations, we use the Frank–Wolfe algorithm with tolerance $10^{-6}$ and step size $2/(k+2)$.  
We solve the upper-level bilevel optimization (planner problem) via the SPSA algorithm using $30,000$ iterations with parameters
$a = 0.1$, $\alpha = 0.40$, $c = 0.4$, $\gamma = 0.03$, and $A = 1200$. Each iteration requires two equilibrium evaluations, resulting in a total of $60,000$ UE solves.  
This setup provides stable convergence while maintaining computational tractability. Lastly, all experiments were conducted on a workstation equipped with an
Intel\textsuperscript{\textregistered} Xeon\textsuperscript{\textregistered} w7-3455 CPU  
and an NVIDIA RTX~A4000 GPU (16\,GB VRAM).  
The bilevel optimization experiments required approximately two days of runtime.

\end{spacing}
\begin{spacing}{1}
\bibliographystyle{plainnat}
\bibliography{myref}

@article{ref:LeBlanc75,
title = {An efficient approach to solving the road network equilibrium traffic assignment problem},
journal = {Transportation Research},
volume = {9},
number = {5},
pages = {309-318},
year = {1975},
author = {L. J. LeBlanc and E. K. Morlok and W. P. Pierskalla},
}

@article{ref:Spall92,
  title={Multivariate stochastic approximation using a simultaneous perturbation gradient approximation},
  author={Spall, James C.},
  journal={IEEE Transactions on Automatic Control},
  volume={37},
  number={3},
  pages={332--341},
  year={1992},
  publisher={IEEE}
}

@article{ref:Roughgarden06,
	author = {T. Roughgarden},
	date-added = {2025-02-11 10:46:01 +0300},
	date-modified = {2025-02-11 10:47:08 +0300},
	journal = {Journal of Computer and System Sciences},
	pages = {922--953},
	title = {On the severity of {B}raess's paradox: {D}esigning networks for selfish users is hard},
	volume = {72},
	year = {2006}}

@article{ref:Xiang97,
	author = {Y. Xiang and D. Y. Sun and W. Fan and X. G. Gong},
	date-added = {2025-02-09 16:07:42 +0300},
	date-modified = {2025-02-09 16:08:27 +0300},
	journal = {Physics Letters A},
	number = {3},
	pages = {216--220},
	title = {Generalized simulated annealing algorithm and its application to the {T}homson model},
	volume = {233},
	year = {1997}}

@article{ref:Storn97,
	author = {Storn, R. and Price, K.},
	date-added = {2025-02-09 16:06:48 +0300},
	date-modified = {2025-02-09 16:07:39 +0300},
	journal = {Journal of Global Optimization},
	pages = {341--359},
	title = {Differential Evolution - {A} Simple and Efficient Heuristic for global Optimization over Continuous Spaces},
	volume = {11},
	year = {1997}}

@article{ref:Endres18,
	author = {Endres, S. C. and Sandrock, C. and Focke, W. W.},
	date-added = {2025-02-09 16:06:02 +0300},
	date-modified = {2025-02-09 16:06:43 +0300},
	journal = {Journal of Global Optimization},
	pages = {181--217},
	title = {A simplicial homology algorithm for {Lipschitz} optimisation},
	volume = {72},
	year = {2018}}

@article{ref:Wu19,
	author = {Y. Wu and H. Chen and F. Zhu},
	date-added = {2025-02-09 15:20:36 +0300},
	date-modified = {2025-02-09 15:22:00 +0300},
	journal = {Transportation Research Part C: Emerging Technologies},
	pages = {246--260},
	title = {{D}{C}{L}-{A}{I}{M}: {D}ecentralized coordination learning of autonomous intersection management for connected and automated vehicles},
	volume = {103},
	year = {2019}}

@inproceedings{ref:Saltan24,
	address = {Antalya, T\"{u}rkiye},
	author = {Y. Saltan and A. Kosay and C.-W. Lin and M. O. Sayin},
	booktitle = {Proceedings of the 5th IFAC Workshop on Cyber-Physical Human Systems},
	date-added = {2025-02-09 14:52:10 +0300},
	date-modified = {2025-02-09 14:54:33 +0300},
	pages = {175--180},
	publisher = {IFAC},
	title = {Strategic control of intersections for efficient traffic routing without tolls},
	volume = {58},
	year = {2024}}

@article{ref:Farahani13,
	author = {R. Z. Farahani and E. Miandoabchi and W. Y. Szeto and H. Rashidi},
	date-added = {2025-02-09 14:15:13 +0300},
	date-modified = {2025-02-09 14:16:23 +0300},
	journal = {European Journal of Operational Research},
	number = {2},
	pages = {281--302},
	title = {A review of urban transportation network design problem},
	volume = {229},
	year = {2013}}

@article{ref:Cianfanelli23b,
	author = {L. Cianfanelli and G. Como and A. E. Ozdaglar and F. Parise},
	date-added = {2025-02-09 14:09:15 +0300},
	date-modified = {2025-02-09 14:10:40 +0300},
	journal = {IEEE Transactions on Automatic Control},
	number = {12},
	pages = {7073--7088},
	title = {Optimal intervention in transportation networks},
	volume = {68},
	year = {2023}}

@article{ref:Wu21,
	author = {M. Wu and S. Amin and A. E. Ozdaglar},
	date-added = {2025-02-09 13:31:05 +0300},
	date-modified = {2025-02-09 13:31:59 +0300},
	journal = {Operations Research},
	number = {1},
	pages = {148--163},
	title = {Value of information in {B}ayesian routing games},
	volume = {69},
	year = {2021}}

@article{ref:Lazar21,
	author = {D. A. Lazar and R. Pedarsani},
	date-added = {2025-02-09 13:26:24 +0300},
	date-modified = {2025-02-09 13:27:22 +0300},
	journal = {IEEE Control Systems Letters},
	number = {5},
	pages = {1849--1854},
	title = {Optimal tolling for multitype mixed autonomous traffic networks},
	volume = {5},
	year = {2021}}

@article{ref:Angelelli21,
	author = {E. Angelelli and V. Morandi and M. Savelsbergh and M. G. Speranza},
	date-added = {2025-02-09 13:22:20 +0300},
	date-modified = {2025-02-09 13:23:29 +0300},
	journal = {European Journal of Operational Research},
	number = {3},
	pages = {863--879},
	title = {System optimal routing of traffic flows with user constraints using linear programming},
	volume = {293},
	year = {2021}}

@article{ref:Jalota24,
	author = {D. Jalota and K. Solovey and K. Gopalakrishnan and S. Zoepf and H. Balakrishnan and M. Pavone},
	date-added = {2025-02-09 13:16:57 +0300},
	date-modified = {2025-02-09 13:19:17 +0300},
	journal = {IEEE Transactions on Control of Network Systems},
	number = {2},
	pages = {1127--1138},
	title = {When efficiency meets equity in congestion pricing and revenue refunding schemes},
	volume = {11},
	year = {2024}}

@article{ref:Maheshwari24,
	author = {C. Maheshwari and K. Kulkarni and D. Pai and J. Yang and M. Wu and S. Sastry},
	date-added = {2025-02-09 13:14:00 +0300},
	date-modified = {2025-02-09 13:15:36 +0300},
	journal = {arXiv preprint arXiv:2401.16844},
	title = {Congestion pricing for efficiency and equity: {T}heory and applications to the {S}an {F}rancisco {B}ay Area},
	year = {2024}}

@article{ref:Paccagnan21,
	author = {D. Paccagnan and R. Chandan and B. L. Ferguson and J. R. Marden},
	date-added = {2025-02-09 12:49:20 +0300},
	date-modified = {2025-02-09 12:50:20 +0300},
	journal = {ACM Transactions on Economics and Computation},
	number = {3},
	pages = {19:1--19:33},
	title = {Optimal taxes in atomic congestion games},
	volume = {9},
	year = {2021}}

@article{ref:Monnot22,
	author = {B. Monnot and B. Francisco and G. Piliouras},
	date-added = {2025-02-09 11:22:05 +0300},
	date-modified = {2025-02-09 11:24:50 +0300},
	journal = {ACM Transactions on Economics and Computation},
	number = {1},
	pages = {2:1--1:25},
	title = {Routing games in the wild: {E}fficiency, equilibration, regret and a price of anaychy via long division},
	volume = {10},
	year = {2022}}

@book{ref:Roughgarden16,
	address = {Cambridge, United Kingdom},
	author = {T. Roughgarden},
	date-added = {2025-02-09 11:17:45 +0300},
	date-modified = {2025-02-09 12:46:53 +0300},
	publisher = {Cambridge University Press},
	title = {Twenty Lectures on Algorithmic Game Theory},
	year = {2016}}

@article{ref:Roughgarden02,
	author = {T. Roughgarden and E. Tardos},
	date-added = {2025-02-07 15:48:56 +0300},
	date-modified = {2025-02-07 15:50:09 +0300},
	journal = {Journal of the ACM},
	number = {2},
	pages = {236-259},
	title = {How bad is selfish routing?},
	volume = {49},
	year = {2002}}

@article{ref:Smith79,
	author = {M. J. Smith},
	date-added = {2025-02-07 15:12:06 +0300},
	date-modified = {2025-02-07 15:13:22 +0300},
	journal = {Transportation Research Part B: Methodological},
	number = {4},
	pages = {295--304},
	title = {The existence, uniqueness and stability of traffic equilibria},
	volume = {13},
	year = {1979}}

@article{ref:Kamenica11,
	author = {E. Kamenica and M. Gentzkow},
	date-added = {2025-02-07 14:33:30 +0300},
	date-modified = {2025-02-07 14:45:47 +0300},
	journal = {American Economic Review},
	number = {6},
	pages = {2590-2615},
	title = {Bayesian persuasion},
	volume = {101},
	year = {2011}}

@book{ref:Beckmann56,
	address = {New Haven, Connecticut},
	author = {M. Beckmann and C. B. McGuire and C. B. Winsten},
	date-added = {2025-02-07 13:25:39 +0300},
	date-modified = {2025-02-07 13:27:57 +0300},
	publisher = {Yale University Press},
	title = {Studies in the Economics of Transportation},
	year = {1956}}

@article{ref:Zhong20,
	author = {Zhong, Z. and Nejad, M. and Lee, E. E.},
	date-modified = {2025-02-07 13:30:26 +0300},
	journal = {IEEE Intelligent Transportation Systems Magazine},
	number = {2},
	pages = {53--70},
	publisher = {IEEE},
	title = {Autonomous and semiautonomous intersection management: {A} survey},
	volume = {13},
	year = {2020}}

@inproceedings{ref:Koutsoupias99,
	address = {Berlin, Heidelberg},
	author = {Koutsoupias, E. and Papadimitriou, C.},
	booktitle = {Proceedings of the 16th Annual Symposium on Theoretical Aspects of Computer Science},
	date-modified = {2025-02-09 10:58:46 +0300},
	editor = {Meinel, C. and Tison, S.},
	pages = {404--413},
	publisher = {Springer Berlin Heidelberg},
	title = {Worst-case equilibria},
	year = {1999}}

@inproceedings{ref:Censi19,
	address = {Auckland, New Zealand},
	author = {Censi, A. and Bolognani, S. and Zilly, J. G. and Mousavi, S. S. and Frazzoli, E.},
	booktitle = {IEEE Intelligent Transportation Systems Conference (ITSC)},
	date-modified = {2025-02-07 14:52:08 +0300},
	pages = {686--693},
	publisher = {IEEE},
	title = {Today me, tomorrow thee: efficient resource allocation in competitive settings using karma games},
	year = {2019}}

@inproceedings{ref:Cole03,
	address = {San Diego, California},
	author = {Cole, R. and Dodis, Y. and Roughgarden, T.},
	booktitle = {Proceedings of the Thirty-Fifth Annual ACM Symposium on Theory of Computing},
	date-modified = {2025-02-07 14:43:45 +0300},
	pages = {521--530},
	publisher = {ACM},
	title = {Pricing network edges for heterogeneous selfish users},
	year = {2003}}

@incollection{ref:Roughgarden07,
	address = {Cambridge},
	author = {T. Roughgarden},
	booktitle = {Algorithmic Game Theory},
	pages = {461--486},
	publisher = {Cambridge University Press},
	title = {Routing games},
	year = {2007}}

@article{ref:Papageorgiou03,
	author = {Papageorgiou, M. and Diakaki, C. and Dinopoulou, V. and Kotsialos, A. and Wang, Y.},
	date-modified = {2025-02-07 13:43:42 +0300},
	journal = {Proceedings of the IEEE},
	number = {12},
	pages = {2043--2067},
	publisher = {IEEE},
	title = {Review of road traffic control strategies},
	volume = {91},
	year = {2003}}

@article{ref:Khayatian20,
	author = {M. Khayatian and M. Mehrabian and E. Andert and R. Dedinsky and S. Choudhary and Y. Lou and A. Shirvastava},
	journal = {ACM Transactions on Cyber-Physical Systems},
	pages = {1--27},
	title = {``A survey on intersection management of connected autonomous vehicles},
	volume = {4},
	year = {2020}}

@article{ref:Dresner08,
	author = {K. Dresner and P. Stone},
	journal = {Journal of Artificial Intelligence Research},
	pages = {591--656},
	title = {A multiagent approach to autonomous intersection management},
	volume = {31},
	year = {2008}}

@inproceedings{ref:Fleischer04,
	address = {Rome, Italy},
	author = {Fleischer, L. and Jain, K. and Mahdian, M.},
	booktitle = {45th Annual IEEE Symposium on Foundations of Computer Science},
	date-modified = {2025-02-07 14:52:33 +0300},
	pages = {277--285},
	publisher = {IEEE},
	title = {Tolls for heterogeneous selfish users in multicommodity networks and generalized congestion games},
	year = {2004}}

@article{ref:Guan20,
	author = {Y. Guan and Y. Ren and S. E. Li and Q. Sun and L. Luo and and K. Li},
	journal = {IEEE Transactions on Vehicular Technology},
	pages = {12 597--12 608},
	title = {``Centralized cooperation for connected and automated vehicles at intersections by proximal policy optimization},
	volume = {69},
	year = {2020}}

@inproceedings{ref:Gemici18,
	address = {Berlin, Germany},
	author = {Gemici, K. and Koutsoupias, E. and Monnot, B. and Papadimitriou, C. and Piliouras, G.},
	booktitle = {36th International Symposium on Theoretical Aspects of Computer Science},
	date-modified = {2025-02-07 14:53:08 +0300},
	editor = {R. Niedermeier and C. Paul},
	number = {31},
	pages = {31:1-31:16},
	publisher = {Leibniz International Proceedings in Informatics},
	title = {Wealth inequality and the price of anarchy},
	year = {2019}}

@article{ref:Lin19,
	author = {Lin, Y.-T. and Hsu, H. and Lin, S.-C. and Lin, C.-W. and Jiang, I. H.-R. and Liu, C.},
	date-modified = {2025-02-07 15:01:57 +0300},
	journal = {ACM Transactions on Embedded Computing Systems (TECS)},
	number = {5s},
	pages = {1--21},
	publisher = {ACM},
	title = {Graph-based modeling, scheduling, and verification for intersection management of intelligent vehicles},
	volume = {18},
	year = {2019}}

@article{ref:Litman09,
	author = {T. Litman},
	date-modified = {2025-02-07 13:28:46 +0300},
	journal = {Victoria Transport Policy Institute},
	pages = {1--19},
	title = {Transportation cost and benefit analysis},
	volume = {31},
	year = {2009}}

@article{ref:Yang16,
	author = {K. Yang and S. I. Guler and M. Menendez},
	journal = {Transportation Research Part C: Emerging Technologies},
	pages = {109--129},
	title = {Isolated intersection control for various levels of vehicle technology: Conventional, connected, and automated vehicles},
	volume = {72},
	year = {2016}}

@article{ref:Zhang15,
	author = {Zhang, K. and Zhang, D. and de La Fortelle, A. and Wu, X. and Gregoire, J.},
	date-modified = {2025-02-09 11:00:06 +0300},
	journal = {IEEE Transactions on Intelligent Transportation Systems},
	number = {5},
	pages = {2487--2500},
	title = {State-driven priority scheduling mechanisms for driverless vehicles approaching intersections},
	volume = {16},
	year = {2015}}

@article{ref:Sayin18,
	author = {Sayin, M. O. and Lin, C.-W. and Shiraishi, S. and Shen, J. and Ba{\c{s}}ar, T.},
	date-modified = {2025-02-07 14:55:15 +0300},
	journal = {IEEE Transactions on Intelligent Transportation Systems},
	number = {3},
	pages = {912--924},
	title = {Information-driven autonomous intersection control via incentive compatible mechanisms},
	volume = {20},
	year = {2018}}

@article{ref:Wardrop52,
	author = {Wardrop, J. G.},
	date-modified = {2025-02-07 13:29:16 +0300},
	journal = {Proceedings of the Institution of Civil Engineers},
	number = {3},
	pages = {325--362},
	publisher = {Thomas Telford-ICE Virtual Library},
	title = {Road paper. {Some} theoretical aspects of road traffic research.},
	volume = {1},
	year = {1952}}

@article{ref:Acemoglu18,
	author = {Acemoglu, D. and Makhdoumi, A. and Malekian, A. and Ozdaglar, A.},
	date-modified = {2025-02-09 12:47:33 +0300},
	journal = {Operations Research},
	number = {4},
	pages = {893--917},
	publisher = {INFORMS},
	title = {Informational Braess' paradox: the effect of information on traffic congestion},
	volume = {66},
	year = {2018}}

@inproceedings{ref:Das17,
	address = {Monticello, Illinois},
	author = {Das, S. and Kamenica, E. and Mirka, R.},
	booktitle = {55th Annual Allerton Conference on Communication, Control, and Computing (Allerton)},
	date-modified = {2025-02-07 16:30:56 +0300},
	pages = {1279--1284},
	publisher = {IEEE},
	title = {Reducing congestion through information design},
	year = {2017}}

@inproceedings{ref:Cianfanelli23,
	address = {Marina Bay Sands, Singapore},
	author = {Cianfanelli, L. and Ambrogio, A. and Como, G.},
	booktitle = {62nd IEEE Conference on Decision and Control (CDC)},
	date-modified = {2025-02-09 22:49:17 +0300},
	pages = {3945--3949},
	publisher = {IEEE},
	title = {Information design in {B}ayesian routing games},
	year = {2023}}

@article{ref:Scheffler22,
	author = {Scheffler, R. and Strehler, M. and Koch, L. V.},
	date-modified = {2025-02-07 15:02:32 +0300},
	journal = {ACM Transactions on Economics and Computation},
	number = {1},
	pages = {1--27},
	publisher = {ACM},
	title = {Routing games with edge priorities},
	volume = {10},
	year = {2022}}

@article{ref:Harks18,
	author = {Harks, T. and Peis, B. and Schmand, D. and Tauer, B. and Koch, L. V.},
	date-modified = {2025-02-09 10:59:05 +0300},
	journal = {ACM Transactions on Economics and Computation (TEAC)},
	number = {1},
	pages = {1--26},
	publisher = {ACM New York, NY, USA},
	title = {Competitive packet routing with priority lists},
	volume = {6},
	year = {2018}}

@article{ref:Hoefer11,
	author = {Hoefer, M. and Mirrokni, V. S. and R{\"o}glin, H. and Teng, S.-H.},
	date-modified = {2025-02-09 10:59:23 +0300},
	journal = {Theoretical Computer Science},
	number = {39},
	pages = {5420--5432},
	publisher = {Elsevier},
	title = {Competitive routing over time},
	volume = {412},
	year = {2011}}

@inproceedings{ref:Tavafoghi19,
	address = {Monticello, Illinois},
	author = {Tavafoghi, H. and Shetty, A. and Poolla, K. and Varaiya, P.},
	booktitle = {57th Annual Allerton Conference on Communication, Control, and Computing (Allerton)},
	date-modified = {2025-02-07 14:52:00 +0300},
	pages = {816--823},
	publisher = {IEEE},
	title = {Strategic information platforms in transportation networks},
	year = {2019}}

@inproceedings{ref:Lopez18,
	address = {Maui, Hawaii},
	author = {Lopez, P. A. and Behrisch, M. and Bieker-Walz, L. and Erdmann, J. and Fl{\"o}tter{\"o}d, Y.-P. and Hilbrich, R. and L{\"u}cken, L. and Rummel, J. and Wagner, P. and Wie{\ss}ner, E.},
	booktitle = {Proceedings of the 21st International Conference on Intelligent Transportation Systems (ITSC)},
	date-modified = {2025-02-11 01:55:10 +0300},
	pages = {2575--2582},
	publisher = {IEEE},
	title = {Microscopic traffic simulation using sumo},
	year = {2018}}

@inproceedings{ref:Huang23,
	address = {Antwerp, Belgium},
	author = {Huang, S.-C. and Lin, K.-E. and Kuo, C.-Y. and Lin, L.-H. and Sayin, M. O. and Lin, C.-W.},
	booktitle = {Design, Automation \& Test in Europe Conference \& Exhibition (DATE)},
	date-modified = {2025-02-09 15:26:27 +0300},
	pages = {1--6},
	publisher = {IEEE},
	title = {Reinforcement-learning-based job-shop scheduling for intelligent intersection management},
	year = {2023}}
\end{spacing}

\end{document}